\def\@copyrightspace{\relax}
\newtheorem{theorem}{Theorem}
\newtheorem{definition}{Definition}
\newtheorem{lemma}{Lemma}
\newtheorem{corollary}{Corollary}
\newtheorem{proposition}{Proposition}
\newcommand{\dd}{{\rm d}}
\newcommand{\oo}{{\rm o}}
\newcommand{\OO}{{\rm O}}
\newcommand{\be}{\begin{eqnarray}}
\newcommand{\ve}[1]{\mathbf{#1}}
\newcommand{\ee}{\end{eqnarray}}
\newcommand{\ben}{\begin{eqnarray*}}
\newcommand{\een}{\end{eqnarray*}}
\newcommand{\bfl}{\begin{flalign*}}
\newcommand{\efl}{\end{flalign*}}
\newcommand{\dref}[1]{(\ref{#1})}
\newcommand{\expect}[1]{{\mathbb E} \Bigl[ #1\Bigr]}
\newcommand{\expectp}[1]{{\mathbb E_{\pi}} \Bigl[ #1\Bigr]}
\newcommand{\expectpstar}[1]{{\mathbb E_{{\pi^\star}^{(\beta)}}} \Bigl[ #1\Bigr]}
\newcommand{\expectsk}[1]{{\mathbb E_{\ve{S(k)}}} \Bigl[ #1\Bigr]}
\newcommand{\expectsT}[1]{{\mathbb E_{\ve{S(kT)}}} \Bigl[ #1\Bigr]}
\newcommand{\expectqk}[1]{{\mathbb E_{\ve{Q(k)}}} \Bigl[ #1\Bigr]}
\newcommand{\calC}{{\mathcal C}}
\newcommand{\calA}{{\mathcal A}}
\newcommand{\calG}{{\mathcal G}}
\newcommand{\calN}{{\mathcal N}}
\newcommand{\calJ}{{\mathcal J}}
\newcommand{\calL}{{\mathcal L}}
\newcommand{\sumj}{\displaystyle\sum_{j\in \calJ}}
\newcommand{\sumA}{\displaystyle\sum_{A\in C^{(j)}(k)}}
\newcommand{\sumc}{\displaystyle\sum_{C \in \calC}}
\newcommand{\bias}{h}
\newcommand{\tw}{\tilde{w}^{(j)}_A}
\newcommand{\bjA}{b^{(j)}_A}
\begin{document}
\title{Scheduling Storms and Streams in the Cloud}
\numberofauthors{3}
\author{
\alignauthor{Javad Ghaderi\\}
\affaddr{Columbia University\\
New York, NY}\\
\affaddr{jghaderi@columbia.edu}
\and
\alignauthor{Sanjay Shakkottai\\}
\affaddr{University of Texas\\
Austin, TX}\\
\affaddr{shakkott@austin.utexas.edu}
\and
\alignauthor{R Srikant\\}
\affaddr{University of Illinois\\
Urbana, IL}
\affaddr{rsrikant@illinois.edu}
}

\maketitle
\begin{abstract}
Motivated by emerging big \textit{streaming} data processing paradigms (e.g., Twitter Storm, Streaming MapReduce), we investigate the problem of scheduling graphs over a large cluster of servers. Each graph is a job, where nodes represent compute tasks and edges indicate data-flows between these compute tasks. Jobs (graphs) arrive randomly over time, and upon completion, leave the system. When a job arrives, the scheduler needs to partition the graph and distribute it over the servers to satisfy load balancing and cost considerations. Specifically, neighboring compute tasks in the graph that are mapped to different servers incur load on the network; thus a mapping of the jobs among the servers incurs a cost that is proportional to the number of ``broken edges''. We propose a low complexity randomized scheduling algorithm that, without service preemptions, stabilizes the system with graph arrivals/departures; more importantly, it allows a smooth trade-off between minimizing average partitioning cost and average queue lengths. Interestingly, to avoid service preemptions, our approach does not rely on a Gibb's sampler; instead, we show that the corresponding limiting invariant measure has an interpretation stemming from a loss system.
\end{abstract}


\begin{keywords}
Graph Partitioning, Dynamic Resource Allocation, Markov Chains, Probabilistic Algorithms
\end{keywords}
\section{Introduction}\label{sec:intro}

In recent years, a new computing model -- stream processing -- is
gaining traction for large-scale cloud computing systems. These
systems \cite{ZSS12,s4,storm,sphere} are driven by real time and
streaming data applications. For instance, consider the computation
needed to answer the question: {\em How may times does the hashtag
  ``\#sigmetrics2015'' appear in Twitter over the next two hours?} The
key feature here is that the data is not (yet) in a database; instead
it is appearing as and when people tweet this hashtag. Applications of
such stream computing are in many domains including social network
analytics and e-commerce.

To address such stream processing, the emerging computation model of
choice is that of graph processing. A computation is represented by a
graph, where nodes in the graph represent either data sources or data
processing (and operate sequentially on a stream of atomic data
units), and edges in the graph correspond to data flows between
nodes. To execute such computations, each node of a graph is mapped to
a machine (server/blade) in a cloud cluster (data center), and the
communication fabric of the cloud cluster supports the data flows
corresponding to the graph edges. A canonical example (and one of the
early leaders in this setting) is Twitter's Storm \cite{storm}, where
the (directed) graph is called a ``topology'', an atomic data unit is
a ``tuple'', nodes are called ``spouts'' or ``bolts'', and tuples flow along
the edges of the topology. We refer to \cite{LRL13} for additional
discussion.

From the cloud cluster side, there are a collection of machines
interconnected by a communication network. Each machine can
simultaneously support a finite number of graph nodes. This number is
limited by the amount of resources (memory/processing/bandwidth) that
is available at the machine; in Storm, these available
resources are called ``slots'' (typically order of ten to fifteen per
machine). Graphs (corresponding to new computations) arrive randomly over time to this cloud cluster, and
upon completion, leave the cluster.
At any time, the scheduling task at the cloud cluster is to
map the nodes of an incoming graph onto the free slots in machines to
have an efficient cluster operation. As an example, the default
scheduler for Storm is round-robin over the free slots; however, this
is shown to be inefficient, and heuristic alternatives have been been
proposed \cite{LRL13}.

In this paper we consider a queueing framework that models such
systems with graph arrivals and departures.  Jobs are graphs that are
{\em dynamically} submitted to the cluster and the scheduler needs to
to partition and distribute the jobs over the machines.
Once deployed in the cluster, the job (a computation graph) will
retain the resources for some time duration depending on the
computation needs, and will release the resources after the
computation is done (i.e., the job departs). The need for efficient
scheduling and dynamic graph partitioning algorithms naturally arises
in many parallel computing applications \cite{graphlab,giraph};
however, the theoretical studies in this area are very limited. To the
best of our knowledge, this is the first paper that develops models of
dynamic stochastic graph partitioning and packing, and the associated
low complexity algorithms with provable guarantees for graph-based
data processing applications.

From an algorithmic perspective, our low complexity algorithm has
connections to the Gibbs sampler and other MCMC (Monte Carlo Markov
Chain) methods for sampling probability distributions (see for example
\cite{bremaud}). In the setting of scheduling in wireless networks,
the Gibb's sampler has been used to design CSMA-like algorithms for
stabilizing the network
\cite{RSS09,shah2012randomized,liu2010towards,jiang2010distributed}.
However, unlike wireless networks where the solutions form independent
sets of a graph, there is no natural graph structure analog in the
graph partitioning. The Gibbs sampler can still be used in our setting
by sampling {\em partitions} of graphs, where, each site of the Gibbs
sampler is a unique way of partitioning and packing a graph among the
machines in the cloud cluster. The difficulty, however, is that there
are an exponentially large number of graph partitions, leading to a
correspondingly large number of queues.  The second issue is that a
Gibbs sampler potentially can interrupt ongoing service of jobs. The analog of a service interruption in
our setting is the migration of a job (graph) from one set of machines
to another in the cloud cluster. This is an expensive operation that
requires saving the state, moving and reloading on another set of
machines.


A novelty of our algorithm is that \textit{we only need to maintain one
queue for each type of graph}. This substantial reduction is achieved
by developing an efficient method to explore the space of solutions in
the scheduling space. Further, our low complexity algorithm performs
updates at appropriate time instances without causing service
interruptions. In summary, our approach allows a smooth trade-off
between minimizing average partitioning cost and average queue sizes,
by using only a small number of queues, with low complexity, and
without service interruptions. As it will become clear later, the key
ingredient of our method is to minimize a modified energy function
instead of the Gibbs energy; specifically, the entropy term in the
Gibbs energy is replaced with the relative entropy with respect to a
probability distribution that arises in loss systems.

\subsection{Related Work}
\label{sec:related}

Dynamic graph scheduling occurs in many computing settings such as
Yahoo!'s S4 \cite{s4}, Twitter's Storm \cite{storm}, IBM's InfoSphere
Stream \cite{sphere}, TimeStream~\cite{QH13}, D-Stream~\cite{ZSS12},
and online MapReduce~\cite{onlinemapreduce10}. Current scheduling
solutions in this dynamic setting are primarily heuristic
\cite{LRL13,rychlyscheduling,chiang}.

The static version of this problem (packing a collection of graphs on
the machines on a one-time basis) is tightly related to \textit{the
  graph partitioning problem}~\cite{graphpartition13, wiki-graph},
which is known to be hard. There are several algorithms (either based
on heuristics or approximation bounds) available in the literature
\cite{feldmann2012balanced,andreev2006balanced,hendrickson1995multi,walshaw2000mesh,chiang}.

More broadly, dynamic bin packing (either scalar, or more recently
vector) has a rich history \cite{survey1,survey2}, with much
recent attention \cite{SZ13b,S13,GZS14}.  Unlike bin packing where
single items are placed into bins, our objective here is to pack
graphs in a dynamic manner.







\subsection{Main Contributions}

We study the problem of partitioning and packing graphs over a cloud
cluster when graphs arrive and depart dynamically over time. The main
contributions of this work can be summarized as follows.

\begin{itemize}[leftmargin=*]
\item {\bf{ A Stochastic Model of Graph Partitioning.}} We develop a
  stochastic model of resource allocation for graph-based applications
  where either the computation is represented by a graph (Storm
  \cite{storm}, InfoSphere Stream \cite{sphere}) or the data itself
  has a graph structure (GraphLab \cite{graphlab}, Giraph
  \cite{giraph}). Most efforts have been on the systems aspects, while
  employing a heuristic scheduler for graph partitioning and
  packing. One of the contributions of this paper is the model itself
  which allows an analytical approach towards the design of efficient
  schedulers.

\item {\bf {Deficiencies of Max Weight-type Algorithms.}}  The dynamic
  graph partitioning problem can be cast as a network resource
  allocation problem; to illustrate we describe a frame-based Max Weight
  algorithm that can jointly stabilize the system and minimize packing
  costs. However, such Max Weight-type solutions have two
  deficiencies:

  (1) they involve periodically solving the static graph partitioning
  problem (NP-hard in general); thus there is little hope that
  this can be implemented in practice,

  (2) they require periodic reset of the system configuration to the
  Max Weight configuration; this interrupts a significant number of
  ongoing computations or services of the jobs in the system and
  require them to be migrated to new machines (which is expensive).


\item {\bf {Low Complexity Algorithms without Service Interruptions.}}
  We develop a new class of low complexity algorithms, specifically
  targeted for the stochastic graph partitioning problems, and
  analytically characterize their delay and partitioning costs. In
  particular, the algorithms can converge to the optimal solution of
  the static graph partitioning problem, by trading-off delay and
  partitioning cost (a tunable parameter). Equally important, this
  class of algorithms do not interrupt the ongoing services in the
  system. The algorithms rely on creating and removing
  \textit{templates}, where each template represents a unique way of
  partitioning and distributing a graph over the machines. A key
  ingredient of the low complexity algorithms is that the decision to
  remove or add templates to the system is only made at the instances
  that a graph is submitted to the cluster or finishes its
  computation; thus preventing interruption of ongoing services.

\end{itemize}
\subsection{Notations}
Some of the basic notations used in this
paper are the following. $|S|$ denotes the cardinality of a set $S$.
$A\backslash B$ is the set difference defined as $\{x \in A, x \notin
B\}$. $\mathds{1}\{x \in A\}$ is the indicator function which is $1$
if $x \in A$, and $0$ otherwise. $\mathds{1}_n$ is the $n$-dimensional vector of all ones. $\mathds{R}_+$ denotes the set of
real nonnegative numbers. For any two probability vectors $\pi,\nu \in
\mathds{R}^n$, the total variation distance between $\pi$ and $\nu$ is
defined as $\|\pi-\nu\|_{TV}=\frac{1}{2}\sum_{i=1}^n|\pi_i-\nu_i|$.
Further, the Kullback--Leibler (KL) divergence of $\pi$ from $\nu$ is defined as
$D_{\mathrm{KL}}(\pi\|\nu) = \sum_i \pi_i \, \log\frac{\pi_i}{\nu_i}$.
Given a stochastic process $z(t)$ which converges in distribution as
$t \to \infty$, we let $z(\infty)$ denote a random variable whose
distribution is the same as the limiting distribution. Given $x \in
\mathds{R}^n$, $x_{\min}=\min_i x_i$, $x_{\max}=\max_i x_i$.


\section{System Model and Definitions}\label{sec:system}

\noindent{\em Cloud Cluster Model and Graph-structured Jobs:}
Consider a collection of machines $\calL$. Each machine $\ell \in \calL$ has a
set of slots $m_\ell$ which it can use to run at most $|m_\ell|$
processes in parallel (see Figure~\ref{fig:template}). These machines
are inter-connected by a communication network. Let $M=\sum_{\ell}|m_{\ell}|$ be the total number of slots in the cluster.

There is a collection of {\em jobs types} $\calJ$, where each job type
$j \in \calJ$ is described by a {\em graph} $\calG_j(V_j,E_j)$
consisting of a set of nodes $V_j$ and a set of edges $E_j$. Each
graph $\calG_j$ represents how the computation is split among the set
of nodes $V_j$. Nodes correspond to computation with each node
requiring a slot on some machine; edges represent data flows between
these computations (nodes).\\

\noindent{\em Job Arrivals and Departures:}
Henceforth, we use the word {\em job} and
{\em graph} interchangeably. We assume graphs of type $j$ arrive
according to a Poisson process with rate $\lambda_j$, and will remain
in the system for an exponentially distributed amount of time with
mean $1/\mu_j$. Node of the graph must be assigned to an empty slot on
one of the machines. Thus a graph of type $\calG_j$ requires a total
number of $|V_j|$ free slots ($|V_j|<M$). For each graph, data center needs to decide how to partition the
graph and distribute it over the machines.\\

\noindent{\em Queueing Dynamics:}
When jobs arrive, they can either
be immediately served, or queued and served at a later time. Thus, there
is a set of queues
$\ve{Q}(t)=(Q^{(j)}(t):\ j\in \calJ )$ representing
existing jobs in the system either waiting for service or receiving
service. Queues follow the usual dynamics:
\be
Q^{(j)}(t)=Q^{(j)}(0)+H^{(j)}(0,t)-D^{(j)}(0,t),
\ee
where $H^{(j)}(0,t)$ and $D^{(j)}(0,t)$ are respectively the number of
jobs of type $j$ arrived up to time $t$ and departed up to time
$t$.\\

\noindent{\em Job Partition Cost:}
For any job, we assume that the cost of data exchange between two
nodes that are inside the same machine is zero, and the cost of data
exchange between two nodes of a graph on different machines is
one. This models the cost incurred by the data center due to the total
traffic exchange among different machines. Note that this model is
only for keeping notation simple; in fact, if we make the cost of each
edge different (depending for instance on the pair of machines on
which the nodes are assigned, thus capturing communication network
topology constraints within the cloud cluster), there is minimal
change in our description below. Specifically, we only need to
redefine the appropriate cost in
(\ref{eqn:cost}), and the ensuing analysis will remain unchanged.\\

\noindent {\em Templates:}
An important construct in this paper is the concept of template.
Observe that for any graph $\calG_j$, there are
several ways (exponentially large number) in which it can be
\textit{partitioned} and \textit{distributed} over the machines (see
Figure~\ref{fig:template}). A {\em template} corresponds to one
possible way in which a graph $\calG_j$ can \textit{partitioned} and
\textit{distributed} over the machines (see
Figure~\ref{fig:template}). Rigorously, a template $A$ for graph
$\calG_j$ is an injective function $A: V_j \to \bigcup_{\ell \in
  \calL}m_\ell $ which maps each node of $\calG_j$ to a unique slot in
one of the machines. We use $\calA^{(j)}$ to denote the set of all
possible templates for graph $\calG_j$. Tying back to the cost
model, for $A \in \calA^{(j)}$, let
$b^{(j)}_A$ be the cost of partitioning $\calG_j$ according to
template $A$, then
\be
b^{(j)}_A=\sum_{(x,y)\in E_j}\mathds{1}\{A(x)\in m_\ell, A(y)\in
m_{\ell^\prime}, \ell \neq \ell^\prime\}. \label{eqn:cost}\\ \nonumber
\ee

\noindent {\em Configuration:}
While there are an extremely large number of templates possible for
each graph, only a limited number of templates can be present in the
system at any instant of time. This is because each slot can be used
by at most one template at any given time.

To track the collection of templates in the system, we let $C^{(j)}(t)
\subset \calA^{(j)}$ to be the set of existing templates of graphs
$\calG_j$ in the system at time $t$. The system configuration at each
time $t$ is then defined as \be \ve{C}(t)=\left( C^{(j)}(t);\ j\in
  \calJ \right).  \ee
By definition, there is a template in the system corresponding to each job that is
being served on a set of machines. Further, when a new job arrives or
departs, the system can (potentially) create a new template that is a
pattern of {\em empty slots} across machines that can be ``filled''
with a specific job type (i.e., one particular graph topology). We
call the former as {\em actual templates}, and the latter as {\em
  virtual templates}. Further, when a job departs, the system can
potentially destroy the associated template.

\begin{figure}
  \centering
  \includegraphics[width=3in]{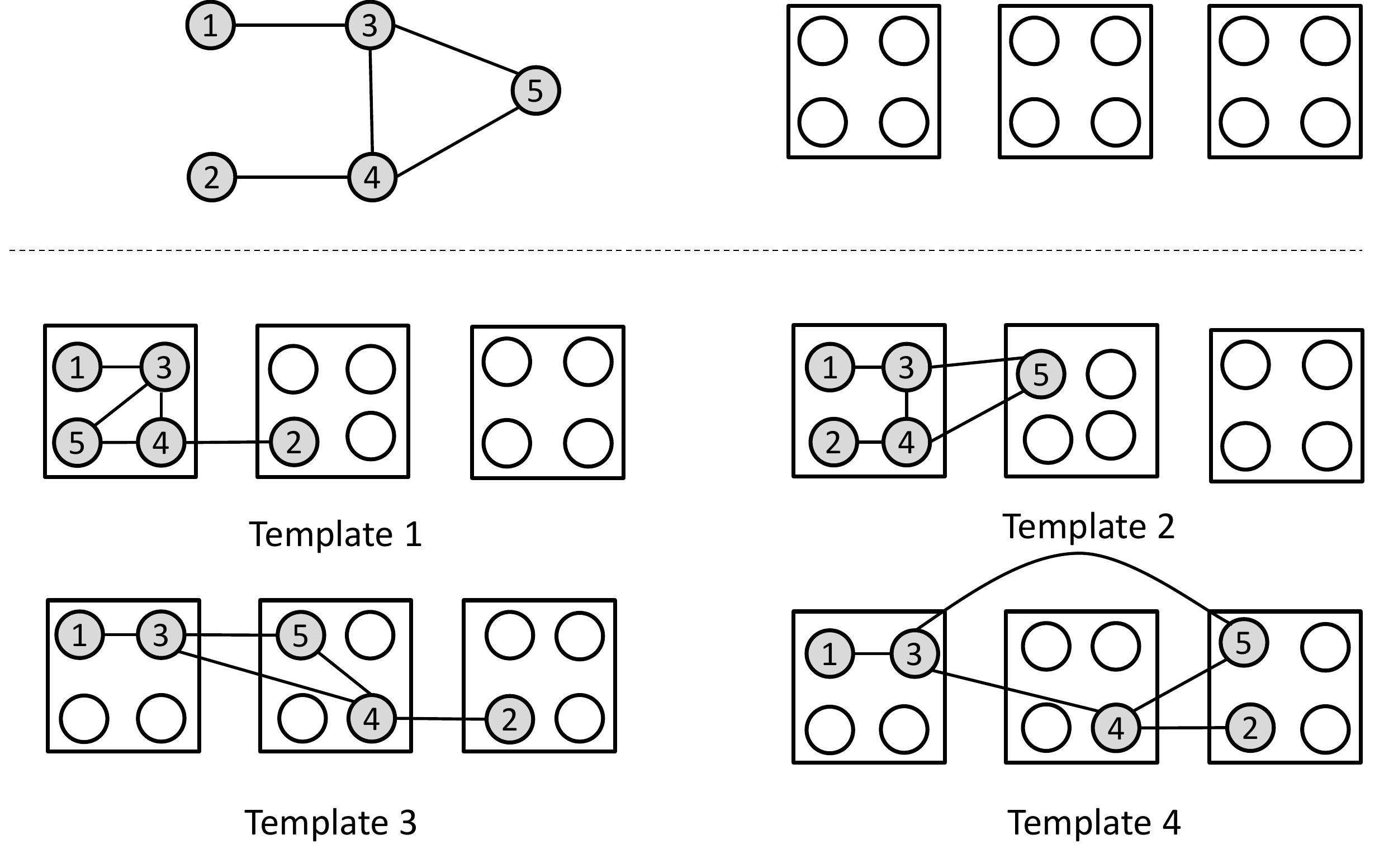}
  \caption{Illustrative templates for partitioning and distributing a
    five-node graph in a cluster of 3 servers, with each server having
    4 empty slots. In this stylized example, all the edges have unit
    ``breaking'' costs, i.e., two connected nodes being scheduled on
    different servers incurs a unit cost. The cost of partitioning the
    graph according to these templates is as follows:
    $b_{\text{Template 1}}= 1$, $b_{\text{Template 2}} = 2$,
    $b_{\text{Template 3}} = 3$, and $b_{\text{Template 4}} =
    4$.}\label{fig:template}
\end{figure}


The set of all possible configurations is denoted by $\calC$. Note that this collection is a
union of the actual and virtual templates.  Mathematically,
$C^{(j)}=C^{(j)}_a \cup C^{(j)}_v$ where $C^{(j)}_a$ is the set of
templates that contain \textit{actual} jobs of type $j$ and
$C^{(j)}_v$ is the set of \textit{virtual} templates, i.e., templates
that are reserved for jobs of type $j$ but currently do not contain
any such jobs.\\

\noindent {\em System State and Updates:}
Finally the system state at
each time is then given by:
\be
\ve{S}(t)=(\ve{Q}(t), \ve{C}(t)).
\ee
It is possible that $|C^{(j)}(t)|>Q^{(j)}(t)$ in which case not all the
templates in $C^{(j)}$ are being used for serving jobs, these unused templates are the virtual templates.


Define the operation $\ve{C} \oplus A^{(j)}$ as adding a feasible
template $A$ for graphs of type $\calG_j$ to the configuration
$\ve{C}$, thus $A$ will be added to $C^{(j)}$ while $C^{(j^\prime)}$
remains unchanged for $j^\prime \neq j$. Define $\calA^{(j)}(\ve{C})$
as the set of possible templates that can be used for adding a graph
$\calG_j$ when the configuration is $\ve{C}$. Clearly, $A \in
\calA^{(j)}(\ve{C})$ must be an injective function that maps graph
$G_j$ to the available slots that \textit{have not} been used by the
current templates in the system configuration, i.e.,
\[
A:V_j \to \left(\bigcup_{\ell\in\calL} m_\ell\right) \backslash
\left(\bigcup_{j^\prime\in \calJ} \bigcup_{A^\prime \in
    C^{(j^\prime)}} A^\prime(V_{j^\prime})\right).
\]

\section{Problem formulation}\label{sec:prob}
Given any stationary (and Markov) algorithm for scheduling arriving graphs, the system state evolves as an irreducible and aperiodic Markov chain. Our goal is to minimize the average partitioning cost, i.e.,
\be \label{problem}
\mbox{minimize } &&\expect{\sumj \sum_{A \in \calA^{(j)}}x_A(\infty)  b^{(j)}_{A}}\\
\mbox{subject to}&& \mbox{system stability} \nonumber
\ee
 where $x_A(\infty)$ is a random variable denoting the fraction of time that a template $A$ is used in steady state. The system stability in \dref{problem} means that the average delay (or average queue size) remains bounded. There is an inherent tradeoff between the average delay and the average partitioning cost. For more lenient delay constraints, the algorithm can defer the scheduling of jobs further until a feasible template with low partitioning cost becomes available.

Throughout the paper, let $\rho_j=\lambda_j/\mu_j$ be the load of graphs of type $\calG_j$.
\begin{definition}[Capacity Region]
The capacity region of the system is defined as
\ben
\Lambda=\Big\{z \in \mathds{R}^{|\calJ|}_{+}: \exists \pi \mathrm{\ s.t.\ } z_j=\sum_{C \in \calC} \pi(C) |C^{(j)}|,\\
\sum_{C \in \calC} \pi(C)=1,\pi(C)\geq 0\Big\},
\een
where $|C^{(j)}|$ denotes the number of templates of graph $\calG_j$ in configuration $C$.
\end{definition}
By the definition, any load vector $z \in \Lambda$ can be supported by a proper time-sharing among the configurations, according to $\pi$.
Equivalently, for any $z \in \Lambda$, there exists an ${\ve{x}=[x_A: A \in \cup_{j} \calA^{(j)} ] }$ such that
\ben
z_j = \sum_{A \in \calA^{(j)}}x_A;\ j \in \calJ,
\een
where $x_A$ is the average fraction of time that template $A $ is used, given by
\ben
x_A=\sumc \pi(C)\mathds{1}(A \in C^{(j)});\ A \in \calA^{(j)},\ j \in \calJ.
\een
It follows from standard arguments that for loads outside $\Lambda$, there is no algorithm that can keep the queues stable.
Given the loads $\ve{\rho}=[\rho_j: j \in \calJ]$, we define an associated static problem.
\begin{definition}[Static Partitioning Problem]\label{def:stat}
\begin{align}
&\displaystyle \min_{\ve{x}} &G(\ve{x}):= \sumj \sum_{ A \in \calA^{(j)}}x_A  b^{(j)}_{A}\\
&\mathrm{subject\ to:} &  \sum_{ A \in \calA^{(j)}}x_A \geq \rho_j;\ j \in \calJ  \label{cons1}\\
&& x_A \geq 0;\  A \in \cup_{j} \calA^{(j)} \label{cons2}\\
&&\Big[\sum_{ A \in \calA^{(j)}}x_A;\ j \in \calJ\Big] \in \Lambda \label{cons3}
\end{align}
\end{definition}
The constraints \dref{cons1}-\dref{cons3} are the required stability conditions. In words, given $\rho_j$ graphs of type $\calG_j$, for $j \in \calJ$, the static partitioning problem is to determine how to partition and distribute the graphs over the servers so as to minimize the total partitioning cost.
For the set of supportable loads ($\rho \in\Lambda$), the static problem is feasible and has a finite optimal value.

If the loads $\rho_j$'s are known, one can solve the static partitioning problem and subsequently find the fraction of time $\pi(C)$ that each configuration $C$ is used. However, the static partitioning problem is a hard combinatorial problem to solve.

In the next sections, we will describe two approaches to solve the dynamic problem \dref{problem} that could converge to the optimal solution of the static partitioning problem, at the expense of growth in delay. The inherent tradeoff between delay and partitioning cost can be tuned in the algorithms. First, we describe a high complexity frame-based algorithm (based on traditional Max Weight resource allocation). Then, we proceed to propose our low complexity algorithm which is the main contribution of this paper.   
\section{High Complexity Frame-Based Algorithm}\label{sec:high}
The first candidate for solving the dynamic graph partitioning problem is to use a Max Weight-type algorithm, with a proper choice of weight for each configuration. However changing the configuration of the system can potentially interrupt a significant number of ongoing services of the jobs in the system. Such service interruptions are operationally very expensive as they incur additional delay to the service or require the storage and transfer of the state of interrupted jobs for future recovery. Hence, to reduce the cost of service interruptions, one can reduce the frequency of configuration updates. In particular, we describe a Frame-Based algorithm which updates the configuration once every $T$ time units. As expected, a smaller value of $T$ could improve the delay and the partitioning cost of the algorithm at the expense of more service interruptions. The description of the algorithm is as follows.

\begin{algorithm}[H]
\caption{Frame-Based Algorithm}
\begin{algorithmic}[1]
\STATE The configuration is changed at the epochs of cycles of length $T$. At the epoch of the $k$-th cycle, $k=0,1,\cdots$, choose a configuration $ C^\star(kT)$ that solves
\be \label{eq:maxweightframe}
 \max_{C\in \calC} \sumj \sum_{ A \in C^{(j)}}\left( \alpha f\left(Q^{(j)}(kT)\right)-b^{(j)}_{A}\right).
\ee
If there are more than one optimal configuration, one of them is chosen arbitrarily at random. $\alpha >0$ is a fixed parameter and $f$ is a concave increasing function.

\STATE The configuration $C^\star(kT)$ is kept fixed over the interval $[kT,(k+1)T)$ during which jobs are fetched from the queues and are placed in the available templates. It is possible that at some time, no jobs of type $j$ are waiting to get service, in which case some of the templates in ${C^\star}^{(j)}(kT)$ might not be filled with the actual jobs. These are virtual templates which act as place holders (tokens) for future arrivals.
\end{algorithmic}
\end{algorithm}
The algorithm essentially needs to find a maximum weight configuration at the epochs of cycles, where the weight of template $A$ for partitioning graph $\calG_j$ is
\ben
 w^{(j)}_A(t)= \alpha f(Q^{(j)}(t))-b^{(j)}_A.
\een
The parameter $\alpha $ controls the tradeoff between the queue size and the partitioning cost of the algorithm. For small values of $\alpha$, the algorithm defers deploying the graphs in favor of finding templates with smaller partitioning cost. For larger values of $\alpha$, the algorithm gives a higher priority to deployment of job types with large queue sizes.

The optimization \dref{eq:maxweightframe} is a hard combinatorial problem, as the size of the configuration space $\calC$ might be exponentially large, thus hindering efficient computation of the max weight configuration. Theorem~\ref{th1} below characterizes the the inherent tradeoff between the average queue size and the average partitioning cost.
\begin{theorem}\label{th1}
Suppose $\rho(1+\delta^\star) \in \Lambda$ for some $\delta^\star>0$. The average queue size and the average partitioning cost under the Frame-Based algorithm are
\begin{align*}
&\expect{\sumj f(Q^{(j)}(\infty))} \leq \frac{(B_1+B_2T)+(1+\delta^\star)G(x^\star)/\alpha}{ \delta^\star \rho_{\min}}+B_1T\\
&\expect{\sumj \sum_{A \in \calA^{(j)}}x_A(\infty)  b^{(j)}_{A}} \leq G(x^\star)+\alpha(B_1+B_2T)
\end{align*}
where $x^\star$ is the optimal solution to the static partitioning problem, $\rho_{\min}=\min_j \rho_j$, and $B_1, B_2$ are constants.
\end{theorem}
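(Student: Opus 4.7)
The plan is a Lyapunov drift-plus-penalty analysis adapted to the frame-based structure. Take the Lyapunov function $V(\ve{Q}) = \sumj \mu_j^{-1} F(Q^{(j)})$ with $F(q) = \int_0^q f(u)\,\dd u$; since $f$ is nondecreasing, $F$ is convex, and since $f$ is concave and increasing on $[0,\infty)$, $f$ is Lipschitz on every bounded interval. The high-level idea is to upper bound the $T$-step conditional drift $\Delta_T(kT):=\expect{V(\ve{Q}((k+1)T)) - V(\ve{Q}(kT)) \mid \ve{S}(kT)}$, add (a multiple of) the accumulated partitioning cost over $[kT,(k+1)T)$ as a penalty, and use the Max Weight optimality of $C^\star(kT)$ to compare against a randomized stationary policy coming from the static partitioning problem (Definition~\ref{def:stat}).

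First, apply a convex-Taylor inequality $F(b)-F(a) \le f(a)(b-a) + C_0(b-a)^2$ (with $C_0$ controlled by the slope of $f$ near the origin) and substitute $(a,b) = (Q^{(j)}(kT),\, Q^{(j)}((k+1)T))$. Taking conditional expectations and using that arrivals of type $j$ over a frame are Poisson with mean $\lambda_j T$ (second moment $O(T+T^2)$) and that total completions of type $j$ have mean $\mu_j |C^{\star(j)}(kT)|\, T$ up to a correction bounded by $|C^{\star(j)}(kT)|\leq M$ (accounting for idle time when $Q^{(j)}$ empties) yields
\be
\Delta_T(kT) \leq T \sumj f(Q^{(j)}(kT))\bigl(\rho_j - |C^{\star(j)}(kT)|\bigr) + (B_1 + B_2 T),\nonumber
\ee
where the constants $B_1,B_2$ absorb the second-moment terms together with the idle-template correction.

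Multiplying by $\alpha$ and adding the frame cost $T\cdot\sumj\sum_{A\in C^{\star(j)}(kT)} b^{(j)}_A$ to both sides, the right-hand side rearranges to $\alpha T\sumj f(Q^{(j)}(kT))\rho_j - T\,\Psi(C^\star(kT)) + \alpha(B_1+B_2T)$, where $\Psi(C):=\alpha \sumj f(Q^{(j)}(kT))|C^{(j)}| - \sumj \sum_{A\in C^{(j)}} b^{(j)}_A$ is exactly the objective maximized in \dref{eq:maxweightframe}. Step~1 of the algorithm gives $\Psi(C^\star(kT))\geq\Psi(\tilde C)$ for every $\tilde C \in \calC$, so for any probability distribution $\tilde\pi$ on $\calC$ with induced marginals $\tilde x_A = \sum_C \tilde\pi(C)\mathds{1}\{A\in C^{(j)}\}$ and cost $\tilde G = \sumj\sum_{A\in\calA^{(j)}}\tilde x_A b^{(j)}_A$,
\be
\Psi(C^\star(kT)) \geq \alpha \sumj f(Q^{(j)}(kT))\sum_{A\in\calA^{(j)}}\tilde x_A - \tilde G.\nonumber
\ee
For the cost bound, take $\tilde\pi$ realizing the static optimum $x^\star$, so $\sum_A\tilde x_A = \rho_j$ and $\tilde G = G(x^\star)$; the $f(Q^{(j)})$ terms cancel and one gets $\alpha\Delta_T(kT) + T\sumj\sum_{A\in C^{\star(j)}(kT)} b^{(j)}_A \leq T G(x^\star) + \alpha(B_1+B_2T)$. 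Telescoping over a stationary cycle and using $\expect{\Delta_T}=0$ in steady state yields the average-cost inequality. For the queue bound, instead take $\tilde\pi$ that supports loads $(1+\delta^\star)\rho$ with cost at most $(1+\delta^\star)G(x^\star)$ (obtained by a suitable mixing of $\pi^\star$ with any policy witnessing $(1+\delta^\star)\rho\in\Lambda$, combined with linearity of $G$ on the scaling $x^\star\mapsto(1+\delta^\star)x^\star$); substituting into the drift bound without adding the penalty term, using $\sum_j\rho_j f(Q^{(j)})\geq \rho_{\min}\sumj f(Q^{(j)})$, and taking steady-state expectation yields the bound on $\expect{\sumj f(Q^{(j)}(kT))}$. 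The extra additive $B_1 T$ arises when passing from the frame-boundary state $\ve{Q}(kT)$ to $\ve{Q}(t)$ at an arbitrary $t$, since $|Q^{(j)}(t)-Q^{(j)}(kT)|=O(T)$ in expectation and $f$ has bounded slope.

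The main obstacle is the construction of the perturbed randomized policy with loads $(1+\delta^\star)\rho$ and cost at most $(1+\delta^\star)G(x^\star)$: one must realize it as a bona fide distribution on $\calC$ rather than merely as an LP-feasible $x$-vector, and verify that the realized cost matches $(1+\delta^\star)G(x^\star)$ up to terms absorbable into $\alpha(B_1+B_2T)$. The remaining delicate step is the bookkeeping of the second-moment and idle-template corrections needed to achieve the specific $B_1+B_2T$ form of the error; beyond those, the rest of the argument is standard drift-plus-penalty telescoping.
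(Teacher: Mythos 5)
Your proposal matches the paper's proof essentially step for step: same Lyapunov function $V=\sum_{j}\mu_j^{-1}F(Q^{(j)})$ with $F=\int_0^{\cdot}f$, same drift-over-a-frame bound absorbing second-moment and idle-template corrections into a $B_1+B_2T$ error, same max-weight comparison against a static fractional comparison policy, and the same explanation of the extra $B_1T$ from moving off frame boundaries. The obstacle you flag at the end is a non-issue in the paper's treatment: at the static optimum the constraint $\sum_{A}x^{\star}_A\geq\rho_j$ holds with equality (by monotonicity of $\calC$), so the scaled vector $(1+\delta^{\star})x^{\star}$ has marginals $(1+\delta^{\star})\rho\in\Lambda$ and is therefore realizable as a bona fide distribution on $\calC$ by the very definition of $\Lambda$, with cost exactly $(1+\delta^{\star})G(x^{\star})$ by linearity---no mixing with an auxiliary witness policy is needed. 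The only structural difference is that the paper compresses your two substitutions into a single drift inequality parameterized by $\delta\in[0,\delta^{\star}]$ and reads off the cost bound at $\delta=0$ and the queue bound at $\delta=\delta^{\star}$.
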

Hence, as $\alpha \to 0$, the algorithm yields an $\alpha$-optimal partitioning cost, and an $\OO\Big(f^{-1}(\frac{1}{\alpha})\Big)$ queue size. Also as expected, infrequent configuration updates could increase the delay and partitioning cost by multiples of $T$. The proof of Theorem~\ref{th1} follows from standard Lyapunov arguments and can be found in the appendix.

\section{Low Complexity Algorithms without Service interruptions}\label{sec:low}
 In this section, we develop a low complexity algorithm that can be used to solve \dref{problem} without interrupting/migrating the ongoing services. Before describing the algorithm, we first introduce a (modified) \textit{weight} for each template. Given the vector of queue sizes $\ve{Q}(t)$, and a concave increasing function $f:\mathds{R}_+\to \mathds{R}_+$, the weight of template $A \in \calA^{(j)}$, $j\in \calJ$, is defined as
  \be \label{eq:tildeweight}
  \tilde{w}^{(j)}_A(t)=\alpha f^{(j)}\Big(\ve{h}+\ve{Q}(t)\Big)-\bjA,
  \ee
  where $ f^{(j)}:\mathds{R_+}^{|\calJ|}\to \mathds{R}_+$ is
  \be \label{eq:tildeweight2}
  f^{(j)}(x)=\max\Big\{f(x_j), \frac{\epsilon}{8M} f(x_{\max})\Big\};\ x_{\max}=\max_{j \in \calJ} x_j,
  \ee
where $\ve{h}=h \mathds{1}_{|\calJ|}$, and $\alpha, h \in \mathds{R}_+$, and $\epsilon \in (0,1)$ are the parameters of the algorithm.

At the instances of job arrivals and departures, the algorithm makes decisions on the templates that are added to/removed from the system configuration. It is important that the addition/removal of templates by the algorithm \textit{does not} disrupt the ongoing service of existing jobs in the configuration.

The low complexity algorithm is a randomized algorithm in which the candidate template to be added to the configuration is chosen randomly among the set of feasible templates. In particular, the following \textit{Random Partition Procedure} is used as a subroutine in our low complexity algorithm.
\begin{algorithm}[H]
\caption{Random Partition Procedure}
\textbf{Input:} current configuration $C$, and a graph $\calG(V,E)$, $V=\{v_1,\cdots,v_{|V|}\}$\\
\textbf{Output:} a virtual template $A$ for distributing $\calG$ over the machines.
\begin{algorithmic}[1]
\STATE $k \gets 1$
\STATE slot-available $\gets 1$
\WHILE{$k \leq |V|$ and slot-available}
\IF{there are no free slots available on any of the machines}
\STATE slot-available $\gets 0$; $A\gets \varnothing$
\ELSE
\STATE place $v_k$ uniformly at random in one of the free slots
\STATE $A(v_k)=$ index of the slot containing $v_k$
\STATE $k \gets k+1$
\ENDIF
\ENDWHILE
\end{algorithmic}
\end{algorithm}
When a random template is generated according to Random Partition Procedure, the decision to keep or remove the template is made probabilistically based on the weight of the template. The description of the low complexity algorithm (called \textit{Dynamic Graph Partitioning (DGP) algorithm}) is as follows.
In the the description, $\beta >0$ is a fixed parameter.
\begin{algorithm}[H]
\caption{Dynamic Graph Partitioning $\mathrm{(DGP)}$}
\noindent\textbf{Arrival instances.}
\noindent Suppose a graph (job) $\calG_j$ arrives at time $t$, then:
\begin{algorithmic}[1]
\STATE This job is added to queue $Q^{(j)}$.
\STATE A virtual graph $\calG_j$ is randomly distributed over the machines, if possible, using \textit{Random Partition Procedure}, which creates a virtual template $A^{(j)}$ for distributing a graph $\calG_j$ over the machines with some partitioning cost $b^{(j)}_A$. Then, this virtual template is added to the current configuration with probability
$\frac{\exp(\frac{1}{\beta}\tw(t^+))}{1+\exp(\frac{1}{\beta}\tw(t^+))},$ otherwise, it is discarded and the configuration does not change. The virtual templates of type $j$ leave the system after an exponentially distributed time duration with mean $1/\mu_j$.
\STATE If there is one or more virtual templates available for accommodating graphs of type $\calG_j$, 
a job from $Q^{(j)}$ (e.g., the head-of-the-line job) is placed in one of the virtual templates chosen arbitrarily at random. This converts the virtual template to an actual template.
\end{algorithmic}

\noindent\textbf{Departure instances.} Suppose a departure of a (virtual or actual) template $A^{(j)}$ occurs at time $t$, then:
\begin{algorithmic}[1]
\STATE If this an actual template, the job departs and queue $Q^{(j)}(t^+)$ is updated.
\STATE A virtual template of the same type $A^{(j)}$ is added back to the configuration with probability $\frac{\exp(\frac{1}{\beta}\tw(t^+))}{1+\exp(\frac{1}{\beta}\tw(t^+))}.$
\STATE If a virtual template for accommodating a graph $\calG_j$ is available in the system, and there are jobs in $Q^{(j)}(t^+)$ waiting to get service, a job from $Q^{(j)}$ (e.g. the head-of-the line job) is placed in one of the virtual templates chosen arbitrarily at random. This converts the virtual template to an actual template.
\end{algorithmic}
\end{algorithm}
To simplify the description, we have assumed that the system starts from empty initial configuration and empty queues but this is not necessary for the results to hold.
We emphasize that the $\mathrm{DGP}$ algorithm does not interrupt the ongoing services of existing jobs in the system. The following theorem states our main result regarding the performance of the algorithm.
\begin{theorem}\label{th3}
Suppose $\rho(1+\delta^\star)\in \Lambda$ for some $0<\delta^\star<1$. Consider the Dynamic Graph Partitioning (DGP) algorithm with function
$$f(x)=\log^{1-b}(x);\ b \in (0,1),$$ and parameters
$$\alpha \leq \beta<1;\ \epsilon \leq \delta^\star;\ h\geq \exp\Big(C_0\frac{1}{\beta}(\frac{1}{\epsilon})^{\frac{2-b+1/b}{1-b}}\Big),$$ where $C_0$ is a large constant independent of all these parameters. Then the average queue size and the average partitioning cost under the DGP algorithm are
\begin{flalign*}
&\sumj\expect{f(Q_{j}(\infty))} \leq \frac{2}{\rho_{min}\delta^\star } \Big(\hat{K}_2+\hat{K}_3-\frac{\beta}{\alpha} \log \gamma_{min}+\\
&\frac{1}{\alpha}(1+\delta^\star/2)G(x^\star)+\frac{\epsilon}{\alpha} b_{max}\Big),\\
&\expect{\sumj \sum_{A \in \calA^{(j)}}x_A(\infty)  b^{(j)}_{A}}\leq G(x^\star)+\alpha(\hat{K}_2+\hat{K}_3)\\
&- \beta \log\gamma_{min}+\epsilon b_{max},
\end{flalign*}
where $x^\star$ is the optimal solution to the static partitioning problem, $\rho_{min}=\min_j \rho_j$, $\hat{K}_2\leq f^\prime(\bias)(M+\sum_j \rho_j)$ and $\hat{K}_3 \leq f(M + \bias )M $, and $\gamma_{\min}$, and $b_{\max}$ are constants.
\end{theorem}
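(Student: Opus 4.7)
My plan is to combine a Lyapunov drift argument with a time-scale separation between the fast configuration dynamics and the slowly-varying weights. Consider the Lyapunov function $F(\ve{Q})=\sum_{j\in\calJ}\int_{0}^{Q^{(j)}}f(u+h)\,\dd u$, whose partial derivative in $Q^{(j)}$ is $f(Q^{(j)}+h)$, which (up to the $\epsilon$-floor in the definition of $f^{(j)}$) matches the $\alpha f^{(j)}$ term appearing in the weight $\tw$ used by DGP. The target is a negative expected drift, from which the average queue bound follows by a Foster--Lyapunov argument and the partitioning-cost bound follows by rearranging the same drift inequality.

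First, I would analyze the configuration Markov chain $\{\ve{C}(t)\}$ with the queue vector $\ve{Q}$ artificially frozen. Given the DGP rules (addition via the Random Partition Procedure with acceptance probability $\exp(\tfrac{1}{\beta}\tw)/(1+\exp(\tfrac{1}{\beta}\tw))$ and departure at rate $\mu_j$), a detailed-balance calculation should yield a stationary distribution of the form
\[
\pi^\star_{\ve{Q}}(\ve{C})\;\propto\;\gamma(\ve{C})\,\exp\!\Big(\tfrac{1}{\beta}\sum_{j\in\calJ}\sum_{A\in C^{(j)}}\tw\Big),
\]
where $\gamma(\ve{C})$ is the combinatorial factor coming from the Random Partition Procedure---this is the ``loss-system'' invariant measure mentioned in the introduction. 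Uniformly lower-bounding $\gamma(\ve{C})$ by $\gamma_{\min}>0$ over reachable configurations will produce the $-\beta\log\gamma_{\min}$ term in the bound. Moreover $\pi^\star_{\ve{Q}}$ is the unique minimizer of the modified free energy
\[
\Phi_{\ve{Q}}(\pi)=-\sum_{C\in\calC}\pi(C)\sum_{j\in\calJ}\sum_{A\in C^{(j)}}\tw+\beta\,D_{\mathrm{KL}}\bigl(\pi\,\big\|\,\gamma/Z\bigr),\qquad Z=\sum_{C\in\calC}\gamma(C).
\]
Comparing $\Phi_{\ve{Q}}$ at $\pi^\star_{\ve{Q}}$ with its value at a feasible distribution supporting the slack load $(1+\delta^\star/2)\rho$ built from the optimal static solution $x^\star$ (which exists since $\rho(1+\delta^\star)\in\Lambda$) should give
\[
\EE_{\pi^\star_{\ve{Q}}}\!\Big[\sum_{j\in\calJ}\sum_{A\in C^{(j)}}\tw\Big]\geq \alpha(1+\tfrac{\delta^\star}{2})\sum_{j}\rho_j f^{(j)}(\ve{h}+\ve{Q})-G(x^\star)+\beta\log\gamma_{\min}-\epsilon b_{\max},
\]
where the $\epsilon b_{\max}$ term absorbs the effect of the $\epsilon/(8M)$ floor in $f^{(j)}$.

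The main obstacle is then bridging from the frozen-queue estimate to the actual time-varying dynamics, since DGP never restarts. The specific choice $f(x)=\log^{1-b} x$ is crucial: its derivative decays fast enough that on a fixed time window the weights move by only a negligible amount. The condition $h\geq\exp\!\big(C_0\beta^{-1}(1/\epsilon)^{(2-b+1/b)/(1-b)}\big)$ is calibrated precisely so that the acceptance probabilities remain bounded away from $0$ and $1$, producing a mixing time for the configuration chain small enough that the resulting approximation error is dominated by the $\delta^\star$ slack. Technically, I would freeze queues on a window $[t,t+\Delta]$, invoke a mixing bound for the conditional configuration chain, then use Lipschitz estimates on $\tw$ with respect to $\ve{Q}$ to control the error from queue drift over the window. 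This coupling between mixing rate and slow weight variation is the most delicate part of the argument.

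Combining the above yields a drift inequality of the form
\[
\EE\!\left[F(\ve{Q}(t+\Delta))-F(\ve{Q}(t))\mid \ve{S}(t)\right]\leq -\tfrac{\alpha\delta^\star}{2}\rho_{\min}\sum_{j\in\calJ}f^{(j)}(\ve{h}+\ve{Q}(t))+\alpha K,
\]
with $K=\hat K_2+\hat K_3-\tfrac{\beta}{\alpha}\log\gamma_{\min}+\tfrac{1}{\alpha}(1+\delta^\star/2)G(x^\star)+\tfrac{\epsilon}{\alpha}b_{\max}$. Taking steady-state expectations and dividing by $\alpha\rho_{\min}\delta^\star/2$ gives the average queue bound. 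The partitioning-cost bound follows from the same inequality by substituting $\tw=\alpha f^{(j)}-\bjA$, isolating $\EE[\sum_{j}\sum_{A\in\calA^{(j)}}x_A(\infty)\bjA]$, and moving the Max-Weight lower bound to the other side.
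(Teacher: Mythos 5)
Your plan matches the paper's three-step proof essentially line for line: (1) detailed-balance analysis of the frozen-queue configuration chain yielding the loss-system measure $\gamma$ and the KL-penalized free-energy characterization; (2) a time-scale decomposition argument, driven by the slow growth of $f=\log^{1-b}$ and the large bias $h$, showing the configuration distribution tracks the instantaneous stationary law; (3) a Foster--Lyapunov drift argument with $V(\ve{Q})=\sum_j\int f(\cdot+h)$ and the slack load $(1+\delta^\star/2)\rho$ built from $x^\star$, rearranged to give both the queue and the cost bounds. The only omissions are implementation details the paper spells out (uniformization of the chain, the Cheeger/SLEM mixing bound, and the adiabatic theorem from the CSMA literature), all of which are consistent with your sketch.
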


We would like to point out that in the above theorem the bounds are explicit for any choices of $\alpha, \beta, \epsilon, h$. The constant $\gamma_{\min}$ is $\min_C \gamma_C $ for a distribution $\gamma$ to be defined in \dref{gammaform} and has a loss-system interpretation (see Step 1 in the Proof of Theorem~\ref{th3}), and $b_{\max}$ is the maximum partitioning cost of any job type (which is obviously less than $M^2$).

The parameter $h$ is called the \textit{bias} and adds an offset to the queues to ensure the algorithm operates near the optimal point at (effectively) all times. The parameter $\beta$ has the similar role as the temperature in Gibbs sampler. As $\beta \to 0$, in steady state, the algorithm generates configurations that are closer to the optimal configuration, however at the expense of growth in queue sizes. We refer to Section \ref{sec:proofs} for the proof and also more insight into the operation of the algorithm.

The following corollary gives an interpretation of the result for a particular choice of the parameters.
\begin{corollary}\label{cor-th}
Choose $\alpha=\beta^{2}$, $h=\exp\Big((\frac{1}{\beta})^{1/(1-b)}\Big)$, $\epsilon =\beta^{b^2/4}$, then as $\beta \to 0$,
\begin{flalign*}
&\sumj\expect{f(Q_{j}(\infty))} \leq \Theta((\frac{1}{\beta})^{2}),\\
&\expect{\sumj \sum_{A \in \calA^{(j)}}x_A(\infty)  b^{(j)}_{A}}\leq G(x^\star)+\Theta(\beta^{b^2/4}).
\end{flalign*}
\end{corollary}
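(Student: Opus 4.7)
The plan is to apply Theorem~\ref{th3} with the specified choices of $\alpha,\beta,\epsilon,h$ and then read off the dominant term in each of the two resulting bounds. Two things need to be done: (i) check that the parameter choices satisfy the three hypotheses of Theorem~\ref{th3} for all sufficiently small $\beta$, and (ii) determine the asymptotic order in $\beta$ of each summand appearing in the bounds.

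For (i), the inequalities $\alpha = \beta^2 \leq \beta < 1$ and $\epsilon = \beta^{b^2/4} \leq \delta^\star$ hold trivially once $\beta$ is small. The nontrivial condition is the lower bound on $h$. Substituting $\epsilon = \beta^{b^2/4}$ into the exponent yields $\frac{1}{\beta}(1/\epsilon)^{(2-b+1/b)/(1-b)} = \beta^{-\eta}$ with $\eta := 1 + \tfrac{b^2}{4}\cdot\tfrac{2-b+1/b}{1-b}$. The chosen $\log h = \beta^{-1/(1-b)}$ therefore dominates $C_0\,\beta^{-\eta}$ iff $\eta \leq 1/(1-b)$, and clearing denominators reduces this to the quadratic inequality $b^2 - 2b + 3 \geq 0$, which holds for every real $b$ (its discriminant is $-8$). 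So the hypotheses are satisfied.

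For (ii), use $f(x) = \log^{1-b}(x)$ and $\log h = \beta^{-1/(1-b)}$ to obtain $f(M+h) = \Theta(1/\beta)$, while $f'(h) = (1-b)\log^{-b}(h)/h$ is super-polynomially small in $\beta$ because of the $1/h$ factor. Consequently $\hat K_2 = \oo(1)$ and $\hat K_3 = \Theta(M/\beta)$. For the partitioning-cost bound, $\alpha(\hat K_2 + \hat K_3) = \Theta(\beta)$, $-\beta\log\gamma_{\min} = \Theta(\beta)$, and $\epsilon b_{\max} = \Theta(\beta^{b^2/4})$; since $b^2/4 < 1$, the last of these dominates as $\beta \to 0$, producing $G(x^\star) + \Theta(\beta^{b^2/4})$. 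For the queue-size bound, the dominant contribution is $\frac{1}{\alpha}(1+\delta^\star/2)G(x^\star) = \Theta(\beta^{-2})$; the remaining summands $\frac{\beta}{\alpha}|\log\gamma_{\min}| = \Theta(\beta^{-1})$, $\frac{\epsilon}{\alpha}b_{\max} = \Theta(\beta^{b^2/4-2})$ (which is $\oo(\beta^{-2})$ since $b^2/4-2 > -2$), and $\hat K_2 + \hat K_3 = \OO(\beta^{-1})$ are all strictly lower-order. Summing and absorbing the constant prefactor $2/(\rho_{\min}\delta^\star)$ into $\Theta(\cdot)$ yields $\Theta(\beta^{-2})$.

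The main (and only slightly non-routine) obstacle is the $h$-admissibility inequality for general $b \in (0,1)$; once that is verified, everything else is straightforward bookkeeping of orders in $\beta$.
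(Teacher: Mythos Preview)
Your proposal is correct and follows exactly the approach the paper intends: the corollary is stated without proof and is meant to be read off directly from the bounds in Theorem~\ref{th3} after substituting the given parameter choices. Your verification of the admissibility condition on $h$ (reducing it to $b^2-2b+3>0$, which holds with strict inequality so that $\log h = \beta^{-1/(1-b)}$ eventually dominates $C_0\beta^{-\eta}$) and your identification of the dominant terms---$\epsilon b_{\max}=\Theta(\beta^{b^2/4})$ in the cost bound and $\tfrac{1}{\alpha}(1+\delta^\star/2)G(x^\star)=\Theta(\beta^{-2})$ in the queue bound---are precisely the computations needed.
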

The corollary above demonstrates how the choice of $\beta$ controls the tradeoff between approaching the optimal partitioning cost and the queueing performance.

\textit{Remark 1. Comparison with CSMA:}
In the setting of scheduling in wireless networks, the Gibbs sampler has been used to design CSMA-like algorithms for
stabilizing the network \cite{RSS09,shah2012randomized,liu2010towards,jiang2010distributed}. Our algorithm is different from this line of work in three fundamental aspects:

(i) Not relying on Gibbs sampler: Unlike wireless networks where the solutions form independent
sets of a graph, there is no natural graph structure analog in the graph partitioning. The Gibbs sampler (and CSMA) can still be used in our setting by sampling {\em partitions} of graphs, where, each site of the Gibbs sampler is a unique way of partitioning and packing a graph among the
machines. The difficulty, however, is that there are an exponentially large number of graph partitions, leading to a
correspondingly large number of queues for each type of graph. A novelty of our algorithm is that \textit{we only need to maintain one
queue for each type of graph}. This substantial reduction is achieved
by using \textit{Random Partition Procedure} for exploring the space of solutions. This leads to minimizing a modified energy function instead of the Gibbs energy; specifically, the entropy term in the
Gibbs energy is replaced with the relative entropy with respect to a probability distribution that arises in an associated loss system (see Step 1 in Section~\ref{sec:proofs}).

(iii) No service interruptions: Our low complexity algorithm performs updates at appropriate time instances without causing service interruptions.

(iii) Adding bias to the queues: The queue-based CSMA algorithms are concerned with stability which pertains to the behavior of the algorithm for large queue sizes. This is not sufficient in our setting because we are not only concerned with stability, but more importantly with the optimal (graph partitioning) cost of the system. The bias $h$ boosts the queue sizes artificially to ensure that the system operates effectively near the optimal point at \textit{all queue sizes}. Without the bias, when the queue sizes are small, the optimal cost of the algorithm could be far from optimal.

\textit{Remark 2. An Alternative Algorithm:}
An alternative description of the algorithm is possible using a dedicated Poisson clock for each queue (independent of arrivals) where the template decisions are made at the ticks of the dedicated clocks. We have presented this alternative algorithm in the appendix.
\section{Proofs}\label{sec:proofs}
In this section, we present the proof of of Theorem~\ref{th3}. Before describing the proof outline, we make the following definition.

\textit{Definition: $\overline{\mathrm{DGP}}(\tilde{W})$.}
Consider the dynamic graph partitioning algorithm with fixed weights $\tilde{W}=[\tw; A\in \calA^{(j)}, j \in \calJ]$, namely, when weights are not chosen according to \dref{eq:tildeweight} but they are simply some fixed numbers all the time. With minor abuse of notations, we use $\overline{\mathrm{DGP}}(\tilde{W})$ to denote this algorithm that uses weights $\tilde{W}$ all the time. Description of $\mathrm{DGP(}\tilde{W})$ is exactly the same as the dynamic partitioning algorithm, except that at arrival/departure instance at time $t$, the decision to add/keep a virtual template $A^{(j)}$ is made according to probability $\frac{\exp(\frac{1}{\beta}\tw)}{1+\exp(\frac{1}{\beta}\tw)},$ independently of ${Q}(t)$.

\textit{Proof Outline.} The proof of Theorem ~\ref{th3} has three steps:
\begin{itemize}
\item[Step 1:] We analyze the steady-state distribution of configurations under $\overline{\mathrm{DGP}}(\tilde{W})$ with fixed weights $\tilde{W}$, and show that for small values of $\beta$, $\overline{\mathrm{DGP}}(\tilde{W})$ will generate configurations which are ``close'' to the max weight configuration, when the template weights are per $\tilde{W}$.
\item[Step 2:] We show that when weights are chosen according to \dref{eq:tildeweight}, although the weights $\tilde{W}(t)$ are time-varying, , the distribution of configurations in the system will be ``close'' to the corresponding steady-state distribution of $\overline{\mathrm{DGP}}(\tilde{W}(t))$, for \textit{all times} $t$ long enough. We show that such ``time-scale decomposition'' holds under the suitable choice of the bias $h$ and the function $f$.
\item[Step 3:] Finally, we stitch the dynamics of queues and configurations together through Lyapunov optimization method to compute the queueing and partitioning cost of our algorithm.
\end{itemize}
\subsubsection*{Step 1: Steady-State Analysis of $\overline{\mathrm{DGP}}(\tilde{W})$}
Under $\overline{\mathrm{DGP}}(\tilde{W})$, the configuration of the system evolves as a ``time-homogeneous'' Markov chain over the state space $\calC$. Note that from the perspective of evolution of configuration in the system, we do not need to distinguish between virtual and actual templates, since transition rates from any configuration $C$ do not depend on whether the templates in $C$ are actual or virtual. To see this, consider any virtual template of graphs $\calG_j$ in $C(t)$. No matter if the virtual template is filled with an actual job or not, the residual time until the departure of this template is still exponential with rate $\mu_j$, due to the memoryless property of exponential distribution and because both virtual templates and jobs have exponential service times with the same mean $1/\mu_j$. The following proposition states the main property of $\overline{\mathrm{DGP}}(\tilde{W})$.
\begin{proposition}\label{th2}
Consider the $\overline{\mathrm{DGP}}(\tilde{W})$ with fixed weights $\tilde{W}=[\tw; A\in \calA^{(j)}, j \in \calJ]$. Then in steady state, the distribution of configurations $\pi$ will solve the following optimization problem
\be \label{eq:mainprop1}
\max_{\pi\in \mathds{R}_+^{|\calC|},\sum_{C \in \calC}\pi_C=1} \expectp{\sum_{j\in \calJ} \sum_{A \in C^{(j)}} \tw}-\beta D_{KL}(\pi \parallel \gamma),
\ee
where $D_{KL}(\cdot \parallel \cdot)$ is the KL divergence of $\pi$ from the probability distribution $\gamma$, where
\be \label{gammaform}
\gamma_C=\frac{1}{Z_\gamma}\left(\sum_\ell |m_l|-\sum_j|C^{(j)}||V_j|\right) !\prod_{j}{\rho_j}^{|C^{(j)|}}, C\in\calC
\ee
and ${Z_\gamma}$ is the normalizing constant.
\end{proposition}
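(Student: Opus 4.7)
The plan is to identify the stationary distribution of the configuration Markov chain under $\overline{\mathrm{DGP}}(\tilde{W})$ explicitly via detailed balance, and then recognize it as the unique maximizer of the KL-regularized concave program on the right-hand side of \dref{eq:mainprop1}.

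First I would set up the transition rates. Under $\overline{\mathrm{DGP}}(\tilde{W})$, the configuration process $\ve{C}(t)$ is a time-homogeneous continuous-time Markov chain on $\calC$ in which transitions only add or remove a single template. For any $C\in\calC$, $j\in\calJ$, and $A\in\calA^{(j)}(C)$, the rate from $C$ to $C\oplus A^{(j)}$ is $\lambda_j\cdot q^{(j)}_A(C)\cdot \frac{\exp(\tw/\beta)}{1+\exp(\tw/\beta)}$, where $q^{(j)}_A(C)$ is the probability that the Random Partition Procedure emits template $A$ from configuration $C$; since it places the $|V_j|$ nodes of $\calG_j$ uniformly at random and without replacement into the $n(C):=\sum_\ell|m_\ell|-\sum_{j'}|C^{(j')}||V_{j'}|$ free slots, one gets $q^{(j)}_A(C)=(n(C)-|V_j|)!/n(C)!$. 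The reverse rate from $C\oplus A^{(j)}$ to $C$ comes from the template's exponential lifetime together with the probability that it is \emph{not} re-added at its departure instant: $\mu_j\cdot \frac{1}{1+\exp(\tw/\beta)}$. Crucially, the identity of the template (virtual or actual) is irrelevant at the configuration level, as noted just before Proposition~\ref{th2}.

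Next I would guess the product-form stationary distribution
\[
\pi(C)\ =\ \frac{1}{Z}\, n(C)!\,\prod_{j}\rho_j^{|C^{(j)}|}\,\exp\Bigl(\tfrac{1}{\beta}\sum_{j}\sum_{A\in C^{(j)}}\tw\Bigr),
\]
which, collecting the combinatorial factor into the loss-system distribution $\gamma$ of \dref{gammaform}, reads $\pi(C)\propto \gamma_C\,\exp(W(C)/\beta)$ with $W(C):=\sum_j\sum_{A\in C^{(j)}}\tw$. Substituting into detailed balance and using $\rho_j=\lambda_j/\mu_j$ together with the explicit form of $q^{(j)}_A(C)$ reduces the verification to the algebraic identity
\[
\frac{\pi(C\oplus A^{(j)})}{\pi(C)}\ =\ \rho_j\cdot\frac{n(C\oplus A^{(j)})!}{n(C)!}\cdot\exp\!\bigl(\tw/\beta\bigr),
\]
which holds by direct inspection of the ansatz. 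Hence $\pi$ is reversible and, by irreducibility of the chain (every template can be removed and re-generated with positive probability via \textit{Random Partition Procedure}), it is the unique stationary distribution.

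Finally I would solve the concave program in \dref{eq:mainprop1} directly. Writing the Lagrangian $L(\pi,\nu)=\sum_C \pi_C W(C)-\beta\sum_C\pi_C\log(\pi_C/\gamma_C)-\nu(\sum_C\pi_C-1)$, the KKT stationarity condition $W(C)-\beta\bigl(\log(\pi_C/\gamma_C)+1\bigr)=\nu$ yields $\pi_C\propto \gamma_C\exp(W(C)/\beta)$, which coincides with the stationary distribution derived above; strict concavity of the objective in $\pi$ guarantees uniqueness of the maximizer, completing the proof. I expect the main obstacle, and the only non-mechanical step, to be the combinatorial identification of $q^{(j)}_A(C)$ that produces the factorial $n(C)!$ in $\gamma_C$; everything else is a standard Gibbs-form detailed-balance computation and a one-line Lagrangian argument. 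This is precisely the ``loss-system'' interpretation alluded to in the introduction: when $\tw\equiv 0$ the acceptance probability is constant and the chain reduces to a reversible loss system whose stationary distribution is $\gamma$.
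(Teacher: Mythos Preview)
Your proposal is correct and follows essentially the same approach as the paper: both identify the stationary law via detailed balance for the add/remove-one-template transitions (using that the Random Partition Procedure picks each feasible template with probability $1/|\calA^{(j)}(C)|=(n(C)-|V_j|)!/n(C)!$), and both solve the strictly concave program \dref{eq:mainprop1} by a one-line Lagrangian/KKT computation to obtain $\pi_C\propto\gamma_C\exp(W(C)/\beta)$. The only cosmetic difference is the order of the two steps (the paper first derives the optimizer and then checks detailed balance, whereas you do the reverse), which is immaterial.
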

Before describing the proof of Proposition~\ref{th2}, we briefly highlight the main features of $\overline{\mathrm{DGP}}(\tilde{W})$  algorithm:
\begin{itemize}
\item[(i)] The algorithm does not interrupt the ongoing services of existing jobs in the system and does not require dedicated computing resources.

\item[(ii)] The algorithm is different from Gibbs sampler as it does not maximize the Gibbs energy. The entropy term $H(\pi)$ in the Gibbs energy has been replaced by the relative entropy $D_{KL}(\pi,\gamma)$.

\item[(iii)] The distribution $\gamma$ has the interpretation of the steady-state distribution of configurations in an \textit{associated loss system} defined as follows: at arrival instances, the arriving graph is randomly distributed over the machines if possible (according to Random Partition Procedure), otherwise it is dropped; at the departure instances, the job (and hence its template) leaves the system.
\end{itemize}
\begin{proof}[of Proposition \ref{th2}]
Consider the maximization problem
\ben
\max_{\{\pi(C)\}}  &F^{(\beta)}(\ve{\pi}) \label{opt2}\\
\mbox{subject to }&\sum_{C\in \calC} \pi(C)=1 \nonumber \\
&\pi(C) \geq 0,\ \forall C \in \calC.
\een
with function $F^{(\beta)}(\pi)$ as in \dref{eq:mainprop1}, which is
\ben
F^{(\beta)}(\ve{\pi})&=&\sum_{\calC}\pi(C)\sum_{j\in \calJ} \sum_{A \in C^{(j)}} \tw-\beta \sum_\calC \pi(C)\log \pi(C)\\
&&+\beta \sum_\calC \pi(C)\log \gamma(C).
\een
Notice that $F^{(\beta)}(\ve{\pi})$ is strictly concave in $\ve{\pi}$. The lagrangian is given by $L(\ve{\pi},\eta)=F^{(\beta)}(\ve{\pi})+\eta (\sum_\calC \pi(C)-1)$ where $\eta \in \mathds{R}$ is the lagrange multiplier. Taking $\partial L/\partial \pi(C)=0$ yields
\ben \label{eq:pinonneg}
\pi(C)=\exp(-1+\frac{\eta}{\beta})\gamma(C)\exp(\frac{1}{\beta}\sum_{j \in \calJ} \sum_{A \in C^{(j)}} \tw);\ C \in \calC,
\een
which is automatically nonnegative for any $\eta$. Hence, by KKT conditions $(\ve{\pi}^\star,\eta^\star)$ is the optimal primal-dual pair if it satisfies $\sum_\calC \pi^\star(C)=1$. Thus the optimal distribution $\pi^\star$ is
\be \label{expform}
\pi^\star (C)=\frac{1}{Z_\beta}\gamma(C)\exp(\frac{1}{\beta}\sum_{j \in \calJ} \sum_{A \in C^{(j)}} \tw).
\ee
where $Z_\beta$ is the normalizing constant.

Next we show that the $\overline{\mathrm{DGP}}(\tilde{W})$ algorithm indeed produces the steady-state distribution \dref{expform} with the choice of $\gamma$ in \dref{gammaform}, by checking the detailed balance equations. Consider a template $A^{(j)}$ for graphs of type $\calG_j$. The detail balanced equation for the pair $C$ and $C \oplus A^{(j)}$, such that $C \oplus A^{(j)} \in \calC$, is given by
\ben
\pi(C \oplus A^{(j)})\mu_j \frac{1}{1+e^{\frac{1}{\beta}w_A^{(j)}}}=\pi(C) \frac{\lambda_j}{|\calA^{(j)}(C)|}\frac{e^{\frac{1}{\beta}\tw}}{1+e^{\frac{1}{\beta}\tw}}.
\een
The left-hand-side is the departure rate of (virtual or actual) template $A^{(j)}$ from the configuration $C \oplus A^{(j)}$. The right-hand-side is the arrival rate of (actual or virtual) graphs $\calG_j$ to the configuration $C$ that are deployed according to template $A^{(j)}$ chosen uniformly at random from $\calA^{(j)}(C)$ (Recall that Random Partition Procedure used in the algorithm selects a template $A \in \calA^{(j)}(C)$ uniformly at random).
Thus the detailed balanced equation is simply
\be
\pi(C \oplus A^{(j)})&=&\pi(C)\frac{\rho_j}{{|\calA^{(j)}(C)|}}e^{\frac{1}{\beta}\tw}.
\ee
Noting that
$$|\calA^{(j)}(C)|=\binom{\sum_\ell |m_\ell|-\sum_j |C^{(j)}||V_j|}{|V_j|}|V_j|\, !,$$
it is then easy to see that $\dref{expform}$ with $\gamma$ as in \dref{gammaform}, indeed satisfies the detailed balance equations, and the normalizing condition that $\sum_C \pi(C)=1$.
This concludes the proof.
\end{proof}
The parameter $\beta$ has the similar role as the temperature in Gibbs sampler. As $\beta \to 0$, in steady state, the $\overline{\mathrm{DGP}}(\tilde{W})$ algorithm generates configurations that are closer to the optimal configuration with maximum weight
\be
\tilde{W}^\star=\max_{C \in \calC} \sum_{j \in \calJ} \sum_{A \in C^{(j)}} \tw.
\label{eq:maxweight}
\ee
The following corollary contains this result.
\begin{corollary}\label{cor1}
Let $F^{(\beta)}({\pi^\star}^{(\beta)})$ be the optimal objective function in \dref{eq:mainprop1}. The algorithm $\overline{\mathrm{DGP}}(\tilde{W})$ is asymptotically optimal in the sense as $\beta \to 0$, $F^{(\beta)}({\pi^\star}^{(\beta)}) \to \tilde{W}^\star$. Moreover, for any $\beta>0$,
\ben
\expectpstar{\sum_{j \in \calJ} \sum_{A \in C^{(j)}} \tw}\geq \max_{C \in \calC} \sum_{j \in \calJ} \sum_{A \in C^{(j)}} \tw+\beta \min_{C\in \calC} \log{\gamma_C}
\een
\end{corollary}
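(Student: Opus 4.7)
The plan is to exploit the variational characterization established in Proposition~\ref{th2}: the steady-state distribution $\pi^{\star(\beta)}$ of $\overline{\mathrm{DGP}}(\tilde{W})$ is the (unique) maximizer of
$F^{(\beta)}(\pi) = \mathbb{E}_\pi\big[W(C)\big] - \beta D_{KL}(\pi\|\gamma)$,
where I write $W(C) := \sum_{j\in\calJ}\sum_{A\in C^{(j)}}\tw$. Both parts of the corollary follow by testing this functional against convenient candidate distributions and sandwiching.

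For the explicit inequality, the key step is to evaluate $F^{(\beta)}$ on the point mass at a maximum-weight configuration. Let $C^\star \in \argmax_{C\in\calC} W(C)$, so $W(C^\star) = \tilde{W}^\star$, and let $\pi = \delta_{C^\star}$. A point mass has zero entropy, so $D_{KL}(\delta_{C^\star}\|\gamma) = -\log \gamma_{C^\star}$, giving $F^{(\beta)}(\delta_{C^\star}) = \tilde{W}^\star + \beta \log \gamma_{C^\star} \geq \tilde{W}^\star + \beta \min_{C\in\calC}\log \gamma_C$. By optimality, $F^{(\beta)}(\pi^{\star(\beta)}) \geq F^{(\beta)}(\delta_{C^\star})$. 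Since $D_{KL}(\pi^{\star(\beta)}\|\gamma) \geq 0$ and $\beta>0$, we can drop this nonnegative term to obtain $\mathbb{E}_{\pi^{\star(\beta)}}[W(C)] \geq F^{(\beta)}(\pi^{\star(\beta)}) \geq \tilde{W}^\star + \beta \min_{C\in\calC}\log \gamma_C$, which is exactly the stated bound.

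For the asymptotic statement, the same construction provides a matching upper bound. On the one hand, $W(C) \leq \tilde{W}^\star$ for every configuration, so for any feasible $\pi$, $\mathbb{E}_\pi[W(C)] \leq \tilde{W}^\star$; combined with $D_{KL}\geq 0$ and $\beta \geq 0$ this yields $F^{(\beta)}(\pi^{\star(\beta)}) \leq \tilde{W}^\star$. On the other hand, the point-mass test above gives $F^{(\beta)}(\pi^{\star(\beta)}) \geq \tilde{W}^\star + \beta\min_{C\in\calC}\log \gamma_C$. As $\beta \downarrow 0$ the two sides pinch together, so $F^{(\beta)}(\pi^{\star(\beta)}) \to \tilde{W}^\star$.

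There is essentially no technical obstacle, since the argument is a direct variational one; the only point that requires a quick sanity check is that $\gamma_{C^\star} > 0$, so that $\log \gamma_{C^\star}$ is finite and $\delta_{C^\star}$ is absolutely continuous with respect to $\gamma$. This is immediate from the explicit form \eqref{gammaform}: every $C\in\calC$ is feasible, so the multinomial-type factor and the product $\prod_j \rho_j^{|C^{(j)}|}$ are strictly positive, ensuring $\min_{C\in\calC}\log\gamma_C > -\infty$. With that remark the proof reduces to the two short sandwich computations above.
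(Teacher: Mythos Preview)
Your proof is correct and follows essentially the same variational argument as the paper: test $F^{(\beta)}$ at the point mass $\delta_{C^\star}$, use $D_{KL}(\delta_{C^\star}\|\gamma)=-\log\gamma_{C^\star}$, and drop the nonnegative $\beta D_{KL}(\pi^{\star(\beta)}\|\gamma)$ term. You additionally supply the upper bound $F^{(\beta)}(\pi^{\star(\beta)})\le \tilde{W}^\star$ to complete the sandwich for the $\beta\to 0$ convergence, which the paper states but leaves implicit.
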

\begin{proof}[of Corollary \ref{cor1}]
Let $\tilde{C}^\star$ be the maximizer in \dref{eq:maxweight}. As a direct consequence of Proposition \ref{th2},
\ben
\expectpstar{\sum_{j \in \calJ} \sum_{A \in C^{(j)}} \tw}-\beta D({\pi^\star}^{(\beta)} \parallel \gamma) \geq \tilde{W}^\star -\beta D(\delta_{\tilde{C}^\star} \parallel \gamma).
\een
Since $D(\upsilon \parallel \gamma)\geq 0$, for any distribution $\upsilon$,
\ben
\expectpstar{\sum_{j \in \calJ} \sum_{A \in C^{(j)}} \tw} &\geq& \tilde{W}^\star -\beta D(\delta_{C^\star} \parallel \gamma)\\
& = & \tilde{W}^\star +\beta \log{\gamma_{\tilde{C}^\star}}\\
& \geq & \tilde{W}^\star +\beta \min_{C \in \calC} \log \gamma_C.
\een
\end{proof}
\subsubsection*{Step 2: Time-Scale Decomposition for $\mathrm{DGP}(\tilde{W}(t))$.}
Recall that $\overline{\mathrm{DGP}}(\tilde{W}(t))$ denotes the algorithm that uses the weight $\tilde{W}(t)$ at all times $s\geq 0$.
With minor abuse of notation, we use $\mathrm{DGP}(\tilde{W}(t))$ to denote the Dynamic Graph Partitioning algorithm (Section~\ref{sec:low}) and its associated time-inhomogeneous Markov chain over the space of configurations $\calC$. The weights $\tilde{W}(t)$ are time-varying (because of the queue dynamics), however the $\mathrm{DGP}(\tilde{W}(t))$ algorithm can still provide an adequately accurate approximation to the optimization \dref{eq:maxweight} at each time, for proper choices of function $f$ and the bias $h$.

Roughly speaking, for the proper choices of $f$ and $h$, $f(h+Q^{(j)}(t))$ will change adequately slowly with time such that a time-scale separation occurs, i.e., convergence of Markov chain $\overline{\mathrm{DGP}}(\tilde{W}(t))$ to its steady state distribution ${\pi_t} ^{(\beta)}$ will occur at a much faster time-scale than the time-scale of changes in $f(h+Q^{(j)}(t))$ (and thus in the weights). Hence, the probability distribution of configurations under $\mathrm{DGP}(\tilde{W}(t))$ will remain ``close'' to ${\pi_t} ^{(\beta)}$ (the steady state distribution of configurations under $\overline{\mathrm{DGP}}(\tilde{W}(t))$). The proof of such a time-scale separation follows from standard arguments in e.g., \cite{RSS09,GS10,ghaderiflow}.

We first uniformize (e.g.~\cite{lippman75,puterman09}) the continuous Markov chain $\ve{S}(t)=(\ve{C}(t),\ve{Q}(t))$ by using a Poisson clock $N_\xi(t)$ of rate
\be \label{eq:xi}
\xi=2\left(\sum_j\lambda_j+M \sum_j\mu_j\right).
\ee
Let $\ve{S}[k]=(\ve{C}[k],\ve{Q}[k])$ be the corresponding jump chain of the uniformized chain. Note that $\ve{S}[k]$ is discrete time and at each index $k$, either a graph $\calG_j$ arrives with probability $\frac{\lambda_j}{\xi}$, or a (virtual/actual) template of type $j$ leaves the system with probability $\frac{|C^{(j)}|\mu_j}{\xi}$, or $\ve{S}[k]$ remains unchanged otherwise.
The following proposition states the main ``time-scale decomposition'' property with respect to the associated jump chain (which can be naturally mapped to the original Markov chain).

\begin{proposition}\label{prop-adiabatic}
Let $\nu_n$ denote the (conditional) probability distribution of configuration at index $n$ given the queues $\ve{Q}[n]$ under $\mathrm{DGP}(\tilde{W}(\ve{Q}[n]))$. Let $\pi_n$ be the steady state distribution of configurations corresponding to $\overline{\mathrm{DGP}}(\tilde{W}(\ve{Q}[n]))$. Given any $0<\epsilon<1$, and any initial state $\ve{S}[0]=(\ve{Q}[0],\ve{C}[0])$, there exists a time $n^\star=n^\star(\epsilon,\beta,\ve{S}[0])$ such that for all $n \geq n^\star$, $\|\pi_n-\nu_n\|_{TV}\leq \epsilon/16$.
\end{proposition}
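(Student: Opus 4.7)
\textit{Proof plan for Proposition~\ref{prop-adiabatic}.} The strategy is a standard two-time-scale (adiabatic) argument of the kind used in \cite{RSS09,GS10,ghaderiflow}, tailored to our setting. Let $P_n$ be the one-step transition kernel of the \emph{frozen} chain $\overline{\mathrm{DGP}}(\tilde{W}(\ve{Q}[n]))$: the kernel that would act if the weights were held fixed at their step-$n$ values. By Proposition~\ref{th2}, $\pi_n$ is the unique invariant distribution of $P_n$. Meanwhile, the actual one-step update of $\nu_n$ under $\mathrm{DGP}(\tilde{W}(\cdot))$ is driven by essentially the same kernel, since within a single jump of the uniformized chain only one arrival or departure occurs, and the acceptance probability used at that instant is determined by the pre-jump queue vector $\ve{Q}[n]$. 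Thus $\nu_{n+1} = \nu_n P_n$, and the triangle inequality gives the basic recursion
\[
\|\nu_{n+1} - \pi_{n+1}\|_{TV} \;\leq\; \|\nu_n P_n - \pi_n P_n\|_{TV} + \|\pi_n - \pi_{n+1}\|_{TV}.
\]

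First I would establish \emph{uniform geometric ergodicity} of the frozen chains. The state space $\calC$ is finite; from any configuration one can reach the empty configuration through at most $M$ departures, and then rebuild any other configuration through at most $M$ arrivals. Because the arrival probabilities are lower bounded by $\min_j \lambda_j/\xi$, the departure probabilities by $\min_j\mu_j/\xi$, and the acceptance ratios $e^{\tilde w/\beta}/(1+e^{\tilde w/\beta})$ are lower bounded in terms of $\beta$ and $b_{\max}$, a Doeblin-type minorization holds: there exist constants $\kappa=\kappa(\beta),\tau_0=\tau_0(\beta)$, independent of $\ve{Q}$, such that $\|\mu P_n^{\tau_0} - \pi_n\|_{TV} \leq 1-\kappa$ for every starting distribution $\mu$.

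Second, I would quantify the \emph{slow variation} of the weights (and hence of $\pi_n$). Between consecutive indices at most one queue changes by one unit, so with $f(x)=\log^{1-b}(x)$ and $f'(x)=(1-b)x^{-1}\log^{-b}(x)$ monotone decreasing for $x\geq e$,
\[
\bigl|\tilde{w}^{(j)}_A(\ve{Q}[n+1]) - \tilde{w}^{(j)}_A(\ve{Q}[n])\bigr| \;\leq\; \alpha f'(h) \;\leq\; \frac{\alpha(1-b)}{h \log^{b} h},
\]
which is extremely small under the hypothesized choice $h \geq \exp(C_0\beta^{-1}\epsilon^{-(2-b+1/b)/(1-b)})$. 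Differentiating the explicit form $\pi_n(C) \propto \gamma_C\exp\bigl(\tfrac{1}{\beta}\sum_{j}\sum_{A\in C^{(j)}}\tilde{w}^{(j)}_A\bigr)$ from Proposition~\ref{th2} in the log-domain, a direct Pinsker / elementary bound yields $\|\pi_n - \pi_{n+1}\|_{TV} \leq \frac{M}{\beta}\max_{j,A}|\tilde{w}^{(j)}_A(\ve{Q}[n+1]) - \tilde{w}^{(j)}_A(\ve{Q}[n])| =: \delta(h,\beta)$, which can be driven below any prescribed threshold by enlarging $h$.

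Finally I would iterate the recursion over blocks of $\tau_0$ steps. Since $P_n$ is (weakly) non-expansive in total variation and contracts by at least $\kappa$ every $\tau_0$ steps against its own invariant measure, unrolling yields
\[
\|\nu_{n+\tau_0} - \pi_{n+\tau_0}\|_{TV} \;\leq\; (1-\kappa)\,\|\nu_n - \pi_n\|_{TV} + \tau_0\,\delta(h,\beta),
\]
and a geometric series bound gives $\|\nu_n - \pi_n\|_{TV} \leq \tau_0\delta(h,\beta)/\kappa + (1-\kappa)^{\lfloor n/\tau_0\rfloor}$, where the second term becomes negligible after some $n^\star=n^\star(\epsilon,\beta,\ve{S}[0])$. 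The first term is made at most $\epsilon/16$ by the stated lower bound on $h$. The main obstacle is calibrating the three rates against each other: the mixing constant $\kappa$ may depend exponentially on $1/\beta$ (through the worst-case ratio of acceptance probabilities), and the precise exponent $(2-b+1/b)/(1-b)$ in the hypothesis on $h$ is exactly what is needed so that $\tau_0\delta(h,\beta)/\kappa \leq \epsilon/16$; getting this accounting right, including the factor $f'(h)^{-1} = h\log^b(h)/(1-b)$, is where the delicate work lies.
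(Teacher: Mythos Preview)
Your overall adiabatic scaffold (triangle inequality, contraction plus drift, geometric series) is the right shape and matches the paper in spirit, but the heart of your argument has a genuine gap: the claim that the frozen chain $\overline{\mathrm{DGP}}(\tilde W(\ve Q[n]))$ enjoys a Doeblin minorization with constants $\kappa,\tau_0$ depending \emph{only} on $\beta$ and not on $\ve Q$ is false. To reach the empty configuration from a full one you need departures that are \emph{not} re-added; for a template $A\in\calA^{(j)}$ this happens with probability $\tfrac{1}{1+e^{\tilde w^{(j)}_A/\beta}}$, and since $\tilde w^{(j)}_A \geq \alpha\tfrac{\epsilon}{8M}f(h+Q_{\max})-b_{\max}$ this probability decays to $0$ as $Q_{\max}\to\infty$. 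The paper's Lemma~\ref{lemma:slem} makes this explicit: the spectral gap satisfies $\tfrac{1}{1-\theta_2}\leq C\exp\!\big(\tfrac{2(M+1)}{\beta}(\alpha f(Q_{\max}+h)+b_{\max})\big)$, so the mixing time blows up with the queues.

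Because the mixing degrades with $Q_{\max}$, your fixed drift bound $\delta(h,\beta)\le \tfrac{M\alpha}{\beta}f'(h)$ cannot close the loop: the product $\tau_0\delta/\kappa$ is of order $f'(h)\exp(c\,\alpha f(Q_{\max})/\beta)$, which diverges. The paper compensates by proving a \emph{queue-dependent} drift bound (Lemma~\ref{alpha ratio}),
\[
\sigma_n=\tfrac{2M\alpha}{\beta}\,f'\!\Big(f^{-1}\big(\tfrac{\epsilon}{8M}f(h+Q_{\max}(n+1))\big)-1\Big),
\]
which exploits the $\tfrac{\epsilon}{8M}f(x_{\max})$ floor built into $f^{(j)}$ in \dref{eq:tildeweight2}: every weight is effectively evaluated at a queue level that scales with $Q_{\max}$, so $\sigma_n$ \emph{shrinks} as $Q_{\max}$ grows. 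The adiabatic condition (Proposition~\ref{drift}) is then $\sigma_n/(1-\theta_2(n+1))\le \epsilon/64$ uniformly in $n$, and the specific choice $f(x)=\log^{1-b}x$ together with the stated exponent on $h$ is exactly what makes $f'(f^{-1}(\cdot))$ beat the exponential in the mixing bound. Your sketch misses both the $Q$-dependence of the mixing and the role of the floor in $f^{(j)}$; without them the ratio you need to control is not bounded. (A minor side point: the algorithm uses $\tilde w(t^+)$, so the one-step kernel is $P_{n+1}$ rather than $P_n$, but this is absorbed into the same drift estimate.)
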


\begin{corollary}\label{cor3}
Given $0<\epsilon<1$, for all $n \geq n^\star (\epsilon, \beta,\ve{S}(0))$,
\begin{align*}
\mathbb{E}_{\nu_n}\Big[\sum_{j} \sum_{A \in C^{(j)}} w^{(j)}_A(n)\Big] \geq \beta \min_{C \in \calC} \log \gamma_C-\epsilon b_{max}\\
+(1-\frac{\epsilon}{4})\max_{C \in \calC} \sum_{j \in \calJ} \sum_{A \in C^{(j)}} w^{(j)}_A(n).
\end{align*}
\end{corollary}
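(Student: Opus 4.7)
The plan is to derive Corollary~\ref{cor3} by composing the steady-state guarantee from Corollary~\ref{cor1} with the time-scale decomposition from Proposition~\ref{prop-adiabatic}. I would first fix $n \geq n^\star$ and work conditionally on $\ve{Q}[n]$, so that the weight vector $\tilde{W}(n)=[\tw(n): A\in \calA^{(j)},\, j\in \calJ]$ becomes deterministic. On this conditional space the frozen chain $\overline{\mathrm{DGP}}(\tilde{W}(n))$ has invariant law $\pi_n$ given by Proposition~\ref{th2}, and Corollary~\ref{cor1} applied to these frozen weights immediately yields
\[
\mathbb{E}_{\pi_n}\Big[\sumj \sum_{A\in C^{(j)}} \tw(n)\Big] \;\geq\; \max_{C\in\calC}\sumj\sum_{A\in C^{(j)}}\tw(n) \;+\; \beta \min_{C\in\calC}\log\gamma_C.
\]
This supplies the analogue of the claim with $\pi_n$ in place of $\nu_n$; the remaining task is to switch from $\pi_n$ to the actual conditional law $\nu_n$.

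For that switch I would invoke the elementary inequality
\[
\bigl|\mathbb{E}_{\nu_n}[W]-\mathbb{E}_{\pi_n}[W]\bigr| \;\leq\; \|\nu_n-\pi_n\|_{TV}\,(W_{\max}-W_{\min}),
\]
applied to $W(C):=\sumj\sum_{A\in C^{(j)}}\tw(n)$. Proposition~\ref{prop-adiabatic} gives $\|\nu_n-\pi_n\|_{TV}\leq \epsilon/16$, so it suffices to bound the oscillation of $W$. Writing $W^\star:=\max_C W(C)$, the definition of $\tw$ together with $f^{(j)}\geq 0$ gives $\tw(n)\geq -b^{(j)}_A\geq -b_{\max}$, and since any feasible configuration contains at most $M$ templates, $W_{\min}\geq -M b_{\max}$, whence $W_{\max}-W_{\min}\leq W^\star + M b_{\max}$. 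Substituting back produces a lower bound of the form $(1-c\epsilon)W^\star + \beta\min_C\log\gamma_C - c'\epsilon b_{\max}$, matching the shape of the claim.

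The main obstacle is pinning down the precise constants $(1-\epsilon/4)$ and $\epsilon b_{\max}$ claimed in the statement, rather than the looser $(1-\epsilon/16)$ and $(\epsilon M/16)\,b_{\max}$ that fall out of the crudest range estimate above. This is precisely where the floor $\frac{\epsilon}{8M}\,f(x_{\max})$ in the definition \dref{eq:tildeweight2} of $f^{(j)}$ is designed to help: it forces every coordinate of the ``work'' part $\alpha f^{(j)}(\ve{h}+\ve{Q})$ to be at least $\frac{\epsilon}{8M}$ times the largest one, so that the cost contribution $\sum_{A \in C^{(j)}} b^{(j)}_A$ can be controlled by a factor of $\epsilon$ times the positive part of $W$. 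I would therefore split $W$ into its work and cost components, apply the TV inequality to each separately, and exploit the $\epsilon/(8M)$ floor to absorb the factor $M$ into an extra $\epsilon$. Beyond this constant-tracking step, the argument is a direct composition of Corollary~\ref{cor1} and Proposition~\ref{prop-adiabatic} and requires no new ideas.
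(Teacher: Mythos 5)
Your high-level plan is exactly right and matches the paper: apply Corollary~\ref{cor1} conditionally on $\ve{Q}[n]$ to get a bound for $\mathbb{E}_{\pi_n}[\cdot]$, then use Proposition~\ref{prop-adiabatic} and the total-variation inequality to transfer that bound to $\mathbb{E}_{\nu_n}[\cdot]$. What is missing is the precise mechanism by which the paper passes from a statement in the \emph{tilded} weights $\tw$ (which govern both the chain and Corollary~\ref{cor1}) to the claimed statement in the \emph{untilded} weights $w^{(j)}_A$, and from $\tilde{W}^\star(n):=\max_C\sum\tw$ to $W^\star(n):=\max_C\sum w^{(j)}_A$. Your outline produces a lower bound on $\mathbb{E}_{\nu_n}[\sum\tw(n)]$ in terms of $\tilde{W}^\star(n)$, but since $w^{(j)}_A\le\tw$ this does not directly bound $\mathbb{E}_{\nu_n}[\sum w^{(j)}_A(n)]$ from below, which is what Corollary~\ref{cor3} asserts; you need a quantitative upper bound on $\sum_j\sum_{A\in C^{(j)}}(\tw(n)-w^{(j)}_A(n))$ as well.

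The paper's missing ingredient that you would also need: from \dref{eq:tildeweight2}, $\tw(n)-w^{(j)}_A(n)\le\frac{\alpha\epsilon}{8M}f(Q_{\max}(n)+h)$, and since any configuration holds at most $M$ templates, $\sum_j\sum_A(\tw-w^{(j)}_A)\le\frac{\alpha\epsilon}{8}f(Q_{\max}+h)$. The \emph{key trick} you do not mention is to control $\alpha f(Q_{\max}+h)$ itself: taking $j'=\arg\max_j Q^{(j)}$ and any single template $A'\in\calA^{(j')}$ gives $\alpha f(Q_{\max}+h)-b^{(j')}_{A'}=w^{(j')}_{A'}(n)\le W^\star(n)$, hence $\alpha f(Q_{\max}+h)\le W^\star(n)+b_{\max}$. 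Substituting turns the tilde-to-plain error into $\frac{\epsilon}{8}W^\star(n)+\frac{\epsilon}{8}b_{\max}$, which combines with the $(1-\epsilon/8)\tilde{W}^\star(n)\ge(1-\epsilon/8)W^\star(n)$ from the TV step to give exactly $(1-\epsilon/4)W^\star(n)+\beta\log\gamma_{\min}-\frac{\epsilon}{8}b_{\max}$. Your proposed route of ``splitting $W$ into work and cost components and applying TV to each separately'' is not what is needed and would still leave you with a spurious factor of $M$; the $\frac{\epsilon}{8M}$ floor enters only through the per-template bound on $\tw-w^{(j)}_A$, not through the TV oscillation.
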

\begin{proof}
Consider any $n \geq n^\star (\epsilon, \beta,\ve{S}(0))$. Let $\tilde{W}^\star(n):=\max_{C \in \calC} \sum_{j \in \calJ} \sum_{A \in C^{(j)}} \tw(n)$. First note that from Corollary \ref{cor1}, Proposition~\ref{prop-adiabatic}, and definition of $\|\cdot\|_{TV}$,
\be
&&\mathbb{E}_{\nu_n}\Big[\sum_{j} \sum_{A \in C^{(j)}} \tw(n)\Big]=\mathbb{E}_{\pi_n}\Big[\sum_{j} \sum_{A \in C^{(j)}} \tw(n)\Big] \nonumber \\
&&+\sum_C\Big[(\pi_n(C)-\nu_n(C))\sum_{j} \sum_{A \in C^{(j)}} \tw(n)\Big]\nonumber \\
&& \geq \tilde{W}^\star(n)+\beta \min_{C \in \calC} \log \gamma_C -2(\frac{\epsilon}{16}) \tilde{W}^\star(n)\nonumber \\
&& =  (1-\frac{\epsilon}{8})\tilde{W}^\star(n)+\beta \min_{C \in \calC} \log \gamma_C. \label{cor2}
\ee
Next, note that by the definition of $w^{(j)}_A$ (see \dref{eq:tildeweight}, \dref{eq:tildeweight2}), for any $j \in \calJ$, $A \in \calA^{(j)}$,
\ben
w^{(j)}_A(n)\leq \tw(n) \leq  w^{(j)}_A(n)+ \frac{\alpha\epsilon}{8M} f(Q_{max}(n)+\bias),
\een
hence for any configuration $C \in \calC$,
\ben \label{ineq1}
0 \leq \sumj \sum_{C \in C^{(j)}} (\tw(n)- w^{(j)}_A(n)) &\leq&  \alpha \frac{\epsilon}{8} f(Q_{max}(n)+\bias).
\een
Suppose $Q^{(j^\prime)}(n)=Q_{\max}(n)$ for some $j^\prime \in \calJ$. Then for any $A^\prime \in \calA^{(j^\prime)}$,
\ben \label{ineq2}
\alpha f(Q_{\max}(n)+\bias)-b^{(j^\prime)}_{A^\prime} &=& {w}^{(j^\prime)}_{A^\prime}(n) \nonumber \\
& \leq & \max_{C \in \calC} \sum_{j \in \calJ} \sum_{A \in C^{(j)}}{w}^{(j)}_{A}(n).
\een
Therefore, it follows that
\ben
0 \leq  \sumj \sum_{C \in C^{(j)}} (\tw(n) -w^{(j)}_A(n)) \leq  \frac{\epsilon}{8}  \max_C \sumj \sum_{C \in C^{(j)}}   w^{(j)}_A\\
 +  \frac{\epsilon}{8} b_{max}.
\een
Let $W^\star(n):=\max_{C \in \calC} \sum_{j \in \calJ} \sum_{A \in C^{(j)}} w^{(j)}_A(n)$. Using the above inequality and \dref{cor2},
\begin{flalign*}
&\mathbb{E}_{\nu_n}\Big[\sum_{j} \sum_{A \in C^{(j)}} w^{(j)}_A(n)\Big] \\
&\geq  \mathbb{E}_{\nu_n}\Big[\sum_{j} \sum_{A \in C^{(j)}} \tw(n)\Big]- \frac{\epsilon}{8}   W^\star(n) -\frac{\epsilon}{8} b_{max}\\
& \geq  (1-\frac{\epsilon}{8})\tilde{W}^\star(n) +\beta \log \gamma_{min}- \frac{\epsilon}{8}   W^\star (n)-\frac{\epsilon}{8} b_{max}\\
& \geq  (1 -\frac{\epsilon}{4} )W^\star (n)+\beta \log \gamma_{min}-\frac{\epsilon}{8} b_{max}.
\end{flalign*}
\end{proof}
\begin{proof}[of Proposition~\ref{prop-adiabatic}]
Below we mention a sketch of the proof of the ``time-scale decomposition'' property for our algorithm.

Let $\Phi^Q$ be the infinitesimal generator of the Markov chain $(C(t))$ under $\overline{\mathrm{DGP}}(\tilde{W}(Q))$, for some vector of queues  $Q$. Let $P^Q_{\xi}=I+\frac{1}{\xi}\Phi^Q$ denote the corresponding transition probability matrix of the jump chain $(C[n])$, obtained by uniformizing $(C(t))$ using the Poison clock $N_\xi(t)$ of rate $\xi$ in \dref{eq:xi}. We use $P^{Q}_\xi (C,C^\prime)$ to denote the transition probability from configuration $C$ to configuration $C^\prime$.

The Markov chain $(C[n])$ is irreducible, aperiodic, and reversible, with the unique steady-state distribution $\pi$ in \dref{expform}. In this case, it is well known that the convergence to the steady-state distribution is geometric with a rate equal to the Second Largest Eigenvalue Modulus (SLEM) of $P^Q_{\xi}$ \cite{bremaud}. Further, using the choice of $\xi$ in \dref{eq:xi}, $(C[n])$ is a lazy Markov chain because at each jump index $n$, the chain will remain in the same state with probability greater than $1/2$. In this case, for any initial probability distribution $\mu_0$ and for all $n \geq 0$,
\begin{equation}\label{mixing1}
\|\mu_0 ({P}^Q_{\xi})^n-\pi\| _{TV} \leq \theta_2^n \frac{1}{2\sqrt{\pi_{min}}},
\end{equation}
where $\theta_2$ is the second largest eigenvalue of $P^Q_\xi$, and $\pi_{min}=\min_C \pi(C)$. Correspondingly, the mixing time of the chain (defined as $\inf \{n>0: \|\nu(n)-\pi(n)\|_{TV} \le \delta\}$) will be less than $\frac{-\log(2 \delta \sqrt{\pi_{min}})}{ (1-\theta_2)}$.

Lemma \ref{lemma:slem} below provides a bound on $\theta_2$ and hence on the convergence rate of Markov chain $P^Q_{\xi}$.
\begin{lemma}\label{lemma:slem}
Let $K_0=\left(\frac{\rho_{min}\wedge 1}{\rho_{max}\vee 1} \right)^M\frac{\wedge_j (\mu_j \wedge \lambda_j)}{|\calC|(M!)^2}$. Then,
\be \label{eq:slem}
\frac{1}{1-\theta_2}\leq \frac{2 \xi^2}{K_0^2}\exp\Big[\frac{2(M+1)}{\beta}( \alpha f(Q_{max}+\bias)+b_{max})\Big].
\ee
\end{lemma}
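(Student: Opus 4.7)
The plan is to combine a lower bound on the minimum stationary probability $\pi_{\min}$ with a lower bound on the minimum nonzero uniformized transition probability $P_{\min}$, and then invoke either the Jerrum–Sinclair canonical-paths inequality or Cheeger's inequality to convert those into a spectral-gap bound. Laziness of $P^Q_\xi$, which is essential for the clean Cheeger form, is built in: the choice \dref{eq:xi} of $\xi$ makes the total outgoing rate at any configuration at most $\xi/2$, so the self-loop probability is at least $1/2$ at every state.

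For the first bound, I start from the explicit form $\pi(C)=Z_\beta^{-1}\gamma(C)\exp\bigl(\sum_j\sum_{A\in C^{(j)}}\tw/\beta\bigr)$ given by Proposition~\ref{th2}. Any configuration hosts at most $M$ templates (one per slot), so $\bigl|\sum_j\sum_A \tw\bigr|\le M(\alpha f(Q_{\max}+\bias)+b_{\max})$; combined with $Z_\beta\le \exp(M(\alpha f+b_{\max})/\beta)$ (which follows from $\sum_C\gamma(C)=1$), this yields $\pi_{\min}\ge \gamma_{\min}\exp(-2M(\alpha f+b_{\max})/\beta)$. Inspecting \dref{gammaform}, the factorial factor ranges over $[1,M!]$ and the product $\prod_j\rho_j^{|C^{(j)}|}$ over $[(\rho_{\min}\wedge 1)^M,(\rho_{\max}\vee 1)^M]$, giving $\gamma_{\min}\ge \frac{1}{|\calC|\,M!}\bigl(\frac{\rho_{\min}\wedge 1}{\rho_{\max}\vee 1}\bigr)^M$. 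For $P_{\min}$, every positive transition is a single add/remove of a template; with $|\calA^{(j)}(C)|\le M!$ and the elementary sigmoidal bounds $\frac{e^x}{1+e^x},\frac{1}{1+e^x}\ge \tfrac12 e^{-|x|}$, any such uniformized probability is at least $\frac{\wedge_j(\lambda_j\wedge\mu_j)}{2\xi\,M!}\exp(-(\alpha f+b_{\max})/\beta)$. Multiplying the two estimates delivers $\pi_{\min}P_{\min}\ge \frac{K_0}{2\xi}\exp\bigl(-(2M+1)(\alpha f+b_{\max})/\beta\bigr)$, with $K_0$ precisely the constant in the lemma.

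Finally, Cheeger's inequality for the lazy reversible chain gives $1-\theta_2\ge \Phi^2/2$, and the conductance is lower-bounded by the one-boundary-edge argument: for any $S\subset\calC$ with $\pi(S)\le 1/2$, irreducibility ensures at least one edge $(x,y)$ with $x\in S$, $y\notin S$ and $\pi(x)P(x,y)\ge \pi_{\min}P_{\min}$, so $\Phi(S)\ge 2\pi_{\min}P_{\min}$. Hence $1/(1-\theta_2)\le 1/(2(\pi_{\min}P_{\min})^2)$, which on substitution produces the $\frac{2\xi^2}{K_0^2}\exp\bigl(2(M+1)(\alpha f+b_{\max})/\beta\bigr)$ bound after absorbing a small slack in the exponent. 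An equivalent route is canonical paths through the empty configuration $\emptyset$ (removing templates of $u$ one by one and then adding those of $v$), of length at most $2M$, with congestion controlled again by $\pi_{\min}P_{\min}$. The main technical nuisance is the combinatorial bookkeeping needed to collapse all the constants ($|\calC|$, $(M!)^2$, $\wedge_j(\lambda_j\wedge\mu_j)$ and $((\rho_{\min}\wedge 1)/(\rho_{\max}\vee 1))^M$) into exactly $K_0^2$, and verifying that the single boundary edge with probability $\ge \pi_{\min}P_{\min}$ genuinely suffices to control $\Phi$; both are routine but must be done with care so that the final exponent tightens to $2(M+1)$ rather than the $2(2M+1)$ a naive squaring would give.
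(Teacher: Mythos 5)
Your overall strategy --- Cheeger's inequality for the lazy reversible jump chain, with conductance lower-bounded by $2\pi_{\min}P_{\min}$ via a single boundary edge, and then explicit lower bounds on $\pi_{\min}$ and on the minimum nonzero uniformized transition probability $P_{\min}$ --- is exactly the paper's route. Your sigmoid estimate for $P_{\min}$, the bound $|\calA^{(j)}(C)|\leq M!$, and the assembly of constants into $K_0$ all match.

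The genuine gap is your $\pi_{\min}$ bound, and you have misdiagnosed its source. You write $\pi_{\min}\geq\gamma_{\min}\exp\bigl(-2M(\alpha f(Q_{\max}+\bias)+b_{\max})/\beta\bigr)$ by applying the symmetric estimate $\bigl|\sum_j\sum_{A}\tw\bigr|\leq M(\alpha f(Q_{\max}+\bias)+b_{\max})$ both to the exponent in the numerator of $\pi(C)$ and to the partition function $Z_\beta$. That double use is exactly where the extra $M$ comes from: with your estimate $\pi_{\min}P_{\min}$ carries exponent $-(2M+1)(\alpha f+b_{\max})/\beta$, so squaring gives $2(2M+1)$, not the $2(M+1)$ the lemma asserts. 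You notice the discrepancy but attribute it vaguely to "naive squaring" and claim it tightens "with care," without saying what must change. The fix is to use the sign-specific bounds on the weight: since $\tw=\alpha f^{(j)}(\ve{h}+\ve{Q})-\bjA$ with $\bjA\in[0,b_{\max}]$ and $0\leq f^{(j)}(\cdot)\leq f(Q_{\max}+\bias)$, one has $\tw\in[-b_{\max},\ \alpha f(Q_{\max}+\bias)]$. Hence the numerator exponent $\sum_j\sum_{A\in C^{(j)}}\tw$ is bounded below by $-Mb_{\max}$ only (not $-M(\alpha f+b_{\max})$), while
$Z_\beta=\sum_{C}\gamma(C)\exp\bigl(\tfrac{1}{\beta}\sum_j\sum_{A}\tw\bigr)\leq \exp\bigl(\tfrac{M\alpha}{\beta}f(Q_{\max}+\bias)\bigr)$ only (not $\exp(M(\alpha f+b_{\max})/\beta)$). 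Together these give $\pi_{\min}\geq\gamma_{\min}\exp\bigl(-M(\alpha f(Q_{\max}+\bias)+b_{\max})/\beta\bigr)$ with exponent $M$, not $2M$; then $\pi_{\min}P_{\min}$ has exponent $-(M+1)(\alpha f+b_{\max})/\beta$, and squaring yields exactly $2(M+1)$. Once this is corrected, your remaining steps --- the factor $2$ from $\Phi\geq 2\pi_{\min}P_{\min}$ cancelling the $1/2$ from the sigmoid, and $\frac{1}{1-\theta_2}\leq\frac{2}{\Phi^2}$ --- close the argument as stated.
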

\begin{proof}[of Lemma~\ref{lemma:slem}]
It follows from Cheeger's inequality~\cite{bremaud} that $\frac{1}{1-\theta_2}\leq \frac{2}{\Psi^2(P_\xi)}$ where $\Psi(P_\xi)$ is the conductance of the Markov chain $P^Q_{\xi}$. The conductance is further bounded from below as
\be
\Psi(P_\xi)&\ge& 2 \pi_{min}\min_{C \neq C^\prime}P^{Q}_\xi (C,C^\prime).
\ee
Under $\overline{\mathrm{DGP}}(\tilde{W})$, with $\tilde{W}=\tilde{W}(Q)$,
\ben
\min_{C \neq C^\prime}P^Q_\xi (C,C^\prime) &=& \frac{1}{\xi}\min_j \frac{\mu_j}{1+e^{\frac{1}{\beta}\tw}}\wedge \frac{\lambda_j}{|\calA^{(j)}(C)|}\frac{e^{\frac{1}{\beta}\tw}}{1+e^{\frac{1}{\beta}\tw}}\\
& \geq & \frac{\wedge_j (\mu_j \wedge \lambda_j) }{\xi M !}\frac{\exp(\frac{1}{\beta}\tilde{w}_{min})\wedge 1}{1+\exp(\frac{1}{\beta}\tilde{w}_{max})}\\
& \geq & \frac{\wedge_j (\mu_j \wedge \lambda_j) }{\xi M!}\frac{\exp(\frac{-1}{\beta}b_{max})}{1+\exp(\frac{\alpha}{\beta}f(Q_{max}+\bias))}
\een
Note that the steady state distribution of the jump chain is still $\pi(C)=\frac{\gamma(C)}{Z_\beta}\exp(\frac{1}{\beta}\sum_j\sum_{A\in C} \tw)$, for $\gamma$ defined in \dref{gammaform}.
Then
\ben
Z_\gamma Z_\beta &\leq & \sum_{C \in \calC} \exp(\frac{1}{\beta}\sum_j\sum_{A\in C} \tilde{w}_{max}) M! (\rho_{max} \vee 1)^M\\
& \leq & |\calC|\exp(\frac{M\alpha}{\beta} f(Q_{max}+\bias)) M! (\rho_{max} \vee 1)^M,
\een
therefore
\be \label{pi-min}
\pi_{min} \geq K_1 \exp\Big(-\frac{M\alpha}{\beta} f(Q_{max}+\bias)-\frac{M}{\beta}b_{max}\Big),
\ee
where $K_1=\left(\frac{\rho_{min}\wedge 1}{\rho_{max}\vee 1} \right)^M\frac{1}{|\calC|M!}$.
Hence
\ben
\Psi(P^Q_\xi) \geq \frac{K_0}{\xi}\exp\Big(-\frac{M+1}{\beta}( \alpha f(Q_{max}+\bias)+b_{max})\Big).
\een
where $K_0=K_1\frac{\wedge_j (\mu_j \wedge \lambda_j)}{M!}$.
\end{proof}
\begin{lemma} \label{alpha ratio}
For any configuration $C \in \calC$,
$
e^{-\sigma_n}\leq \frac{\pi_{n+1}(C)}{\pi_n(C)} \leq e^{\sigma_n},
$
where
\begin{equation} \label{eq: alpha}
\sigma_n= \frac{2M\alpha}{\beta} f^\prime\Big(f^{-1}\Big(\frac{\epsilon}{8M} f(h+Q_{max}(n+1))\Big)-1\Big).
\end{equation}
\end{lemma}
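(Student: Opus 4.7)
I will exploit the explicit product form of $\pi_n$ from Proposition~\ref{th2} to reduce the lemma to a one-step Lipschitz estimate on the template weights. Applied with $\tilde{W}=\tilde{W}(\ve{Q}[n])$, that proposition gives
\begin{equation*}
\pi_n(C) \;=\; \frac{\gamma(C)}{Z_\beta(n)}\exp\!\left(\tfrac{1}{\beta}\, W_n(C)\right),\qquad W_n(C)\;:=\;\sum_{j\in\calJ}\sum_{A\in C^{(j)}}\tw(n),
\end{equation*}
where $\gamma$ is given by \dref{gammaform} and does not depend on $n$. Dividing,
\begin{equation*}
\frac{\pi_{n+1}(C)}{\pi_n(C)}\;=\;\frac{Z_\beta(n)}{Z_\beta(n+1)}\,\exp\!\left(\tfrac{1}{\beta}\bigl(W_{n+1}(C)-W_n(C)\bigr)\right),
\end{equation*}
so everything reduces to a uniform bound on the weight increments $W_{n+1}(C)-W_n(C)$; the normalizing-constant ratio is a $\pi_n$-average of the same exponentials and therefore enjoys the same bound.

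The core estimate is on the pointwise change of weights. Between two consecutive jumps of the uniformized chain, at most one coordinate of $\ve{Q}[\cdot]$ changes by one unit (arrival, departure, or no-op). Since the partition costs $\bjA$ are time-invariant, $\tw(n+1)-\tw(n)=\alpha\bigl(f^{(j)}(\ve{h}+\ve{Q}[n+1])-f^{(j)}(\ve{h}+\ve{Q}[n])\bigr)$. I will then show, by case analysis on which of the two branches of $f^{(j)}(\ve{x})=\max\{f(x_j),(\epsilon/8M)f(x_{max})\}$ is active at times $n$ and $n+1$, that uniformly in $j$
\begin{equation*}
\bigl|f^{(j)}(\ve{h}+\ve{Q}[n+1])-f^{(j)}(\ve{h}+\ve{Q}[n])\bigr|\;\le\;f'\!\bigl(x^*(n+1)-1\bigr),
\end{equation*}
where $x^*(n+1):=f^{-1}\bigl((\epsilon/8M)f(h+Q_{max}(n+1))\bigr)$. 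In every case, concavity and monotonicity of $f$ bound the increment by $f'$ evaluated at a point no smaller than $x^*(n+1)-1$; the $-1$ is the slack required when the unit coordinate change either (i) crosses the non-smooth boundary between the two branches of the max, or (ii) coincidentally shifts $x_{max}$ (which happens only when $j$ is the coordinate attaining the maximum, and carries the additional prefactor $\epsilon/(8M)<1$). Summing over the templates in $C$ and using the capacity bound $\sum_j|C^{(j)}|\le M$ (each template occupies at least one of the $M$ slots) then yields $|W_{n+1}(C)-W_n(C)|\le \alpha M f'(x^*(n+1)-1)$.

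To conclude, I combine the pointwise weight bound with the normalizing constant. Writing $Z_\beta(n+1)/Z_\beta(n)=\mathbb{E}_{\pi_n}\!\bigl[\exp(\tfrac{1}{\beta}(W_{n+1}(C)-W_n(C)))\bigr]$ and applying the previous bound gives $Z_\beta(n+1)/Z_\beta(n)\in[\exp(-\alpha M f'(x^*(n+1)-1)/\beta),\,\exp(\alpha M f'(x^*(n+1)-1)/\beta)]$. Multiplying the two factors in the ratio identity above yields $\pi_{n+1}(C)/\pi_n(C)\in[e^{-\sigma_n},e^{\sigma_n}]$ with $\sigma_n=(2M\alpha/\beta)\,f'(x^*(n+1)-1)$, exactly as stated; the factor~$2$ is simply the product of the two individual contributions (the direct weight change and the normalizing-constant shift). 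The main obstacle is the middle step: $f^{(j)}$ is a maximum of two concave functions (hence non-differentiable), and a single queue change can simultaneously shift $x_j$ and $x_{max}$, so the $-1$ inside $f'$ has to be large enough to absorb both the branch-crossing and the coupled max shift. Once this pointwise Lipschitz estimate is established, everything else is routine.
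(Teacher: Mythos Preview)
Your proposal is correct and mirrors the paper's proof: both write $\pi_{n+1}(C)/\pi_n(C)$ via the product form from Proposition~\ref{th2}, bound the per-template increment $|f^{(j)}(\ve{h}+\ve{Q}[n+1])-f^{(j)}(\ve{h}+\ve{Q}[n])|$ by $f'(x^*(n+1)-1)$ using concavity and the fact that at most one queue changes by one, sum over at most $M$ templates, and then absorb the normalizing-constant ratio as a second factor of the same size to get the overall constant $2M\alpha/\beta$. The only stylistic difference is that the paper sidesteps your branch-by-branch case analysis via the reparametrization $f^{(j)}(\ve{Q}+\ve{h})=f\bigl(\max\{Q^*,Q^{(j)}\}+h\bigr)$ with $Q^*:=f^{-1}\bigl((\epsilon/8M)f(h+Q_{max})\bigr)-h$, reducing everything to one application of the mean value theorem on $f$ together with $|\max\{Q^*(n+1),Q^{(j)}(n+1)\}-\max\{Q^*(n),Q^{(j)}(n)\}|\le 1$; this is where the $-1$ actually comes from (it is the Lipschitz slack $\min(a,b)\ge a-1$ in the mean-value step, not specifically a branch-crossing correction), but your case analysis reaches the same conclusion.
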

\begin{proof}[of Lemma \ref{alpha ratio}]
Note that
$$
\frac{\pi_{n+1}(C)}{\pi_n(C)}=\frac{Z_n(\beta)}{Z_{n+1(\beta)}}e^{\frac{\alpha}{\beta}\sum_{j,A \in C^{(j)}}f^{(j)}(Q(n+1)+h)-f^{(j)}(Q(n)+h)}.
$$
It is easy to show that
\ben
\frac{Z_n(\beta)}{Z_{n+1}(\beta)}  \leq  \max_C e^{\frac{\alpha}{\beta}\sum_{j,A \in C^{(j)}}f^{(j)}(Q(n+1)+h)-f^{(j)}(Q(n)+h)}.
\een
Let $Q^*(n):=f^{-1}(\frac{\epsilon}{8M} f(h+Q_{max}(n)))-h$, and define $\tilde{Q}^{(j)}(n):=\max\{Q^*(n), Q^{(j)}(n)\}$.
Then,
\begin{flalign*}
&f^{(j)}(Q(n+1)+h)-f^{(j)}(Q(n)+h)\\
&  =f(\tilde{Q}^{(j)}(n+1)+h)-f(\tilde{Q}^{(j)}(n)+h) \\
& \leq f^\prime(\tilde{Q}^{(j)}(n+1)+h-1)|\tilde{Q}^{(j)}(n+1)-\tilde{Q}^{(j)}(n)|  \\
& \leq f^\prime({Q}^{\star}(n+1)+h-1)\\
&  = f^\prime(f^{-1}(\frac{\epsilon}{8M} f(h+Q_{max}(n+1)))-1)
\end{flalign*}
 where we have used the mean value theorem and the facts that $f$ is a concave increasing function and at each index $n$, one queue can change at most by one.
Therefore,
\ben
\frac{\pi_{n+1}(C)}{\pi_n(C)} \leq e^{2\frac{M\alpha}{\beta} f^\prime(f^{-1}(\frac{\epsilon}{8M} f(h+Q_{max}(n+1)))-1)}.
\een
A similar calculation shows that also
\ben
\frac{\pi_{n}(C)}{\pi_{n+1}(C)} \leq e^{2\frac{M\alpha}{\beta} f^\prime(f^{-1}(\frac{\epsilon}{8M} f(h+Q_{max}(n+1)))-1)}.
\een
This concludes the proof.
\end{proof}
Next, we use the following version of Adiabatic Theorem from~\cite{RSS09} to prove the time-scale decomposition property of our algorithm.
\begin{proposition}\label{drift}
(Adapted from \cite{RSS09}) Suppose
\begin{equation}\label{alphaT}
\frac{\sigma_n}{1-\theta_2(n+1)} \leq \delta^\prime/4 \mbox{ for all } n \geq 0,
\end{equation}
for some $\delta^\prime >0$, where $\theta_2(n+1)$ denotes the second largest eigenvalue of $P^{Q(n+1)}_\xi$. Then $\|\pi_n-\nu_n\|_{TV} \leq \delta^\prime$, for all $n \geq n^\star(\beta,\delta^\prime,\ve{S}(0))$, where
$n^\star$ is the smallest $n$ such that
\be\label{n*}
\frac{1}{\sqrt{\pi_{min}(0)}} \exp(-\sum_{k=0}^{n}(1-\theta_2(k))^2 \leq \delta^\prime.
\ee
\end{proposition}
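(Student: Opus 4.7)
The plan is to control the one-step evolution of a potential $\phi_n$ that measures how far $\nu_n$ lies from $\pi_n$, by separating two effects: the contraction induced by $P^{Q(n+1)}_\xi$ toward its own stationary distribution $\pi_{n+1}$, and the drift of the target from $\pi_n$ to $\pi_{n+1}$ quantified by $\sigma_n$ via Lemma~\ref{alpha ratio}. The natural potential for a reversible lazy chain is $\phi_n := \|d\nu_n/d\pi_n - 1\|_{L^2(\pi_n)}$ (equivalently $\sqrt{\chi^2(\nu_n,\pi_n)}$), because spectral calculus for the self-adjoint operator $P^{Q(n+1)}_\xi$ on $L^2(\pi_{n+1})$ yields the sharp one-step bound $\|\nu_n P^{Q(n+1)}_\xi - \pi_{n+1}\|_{L^2(\pi_{n+1})} \leq \theta_2(n+1)\,\|\nu_n - \pi_{n+1}\|_{L^2(\pi_{n+1})}$, and the final TV conclusion will follow from $2\|\mu-\pi\|_{TV} \leq \|\mu-\pi\|_{L^2(\pi)}$ via Cauchy--Schwarz.

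First I would translate the pointwise ratio bound of Lemma~\ref{alpha ratio} into a comparison between the $\pi_n$- and $\pi_{n+1}$-based potentials: since $e^{-\sigma_n}\pi_n(C) \leq \pi_{n+1}(C) \leq e^{\sigma_n}\pi_n(C)$ uniformly in $C \in \calC$, one can swap reference measures at multiplicative cost $e^{\pm\sigma_n}$, and additionally $\|\pi_n - \pi_{n+1}\|_{L^2(\pi_{n+1})} \leq e^{\sigma_n}-1$. Combining this exchange with the contraction above and the triangle inequality yields a recursion
\begin{equation*}
\phi_{n+1} \leq \theta_2(n+1)\, e^{\sigma_n}\, \phi_n + C_1 \sigma_n,
\end{equation*}
for an explicit constant $C_1$. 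Under the hypothesis~\dref{alphaT}, $\sigma_n$ is small enough relative to $1-\theta_2(n+1)$ that $\theta_2(n+1) e^{\sigma_n} \leq 1-\tfrac{1}{2}(1-\theta_2(n+1))$, so the recursion is a genuine contraction with additive drift of order $\sigma_n$.

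Iterating from $n=0$ gives $\phi_n \leq \phi_0\prod_{k=1}^{n}(1-\tfrac{1}{2}(1-\theta_2(k))) + C_1\sum_{k=0}^{n-1}\sigma_k\prod_{j=k+2}^{n}(1-\tfrac{1}{2}(1-\theta_2(j)))$. The initial term uses $\phi_0 \leq 1/\sqrt{\pi_{min}(0)}$, since $\nu_0 = \delta_{C(0)}$ and $\chi^2(\delta_x,\pi) \leq 1/\pi(x)$, and decays as $\exp(-\tfrac{1}{2}\sum_k(1-\theta_2(k)))$; invoking Cheeger's inequality $1-\theta_2 \geq \Psi^2/2$ as already used in Lemma~\ref{lemma:slem} and bounding $\Psi^2$ from below by a multiple of $(1-\theta_2)^2$ via the easy side of Cheeger accounts for the square in the exponent appearing in the stated definition of $n^\star$. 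The steady-state residual is dominated by $C_1\sup_k \sigma_k/(1-\theta_2(k+1)) \leq C_1\delta'/4$. Choosing $n\geq n^\star$ so that the transient part falls below $\delta'/2$, and converting $\phi_n$ back to total variation via Cauchy--Schwarz, yields $\|\nu_n-\pi_n\|_{TV} \leq \delta'$ after absorbing $C_1$ into constants (or lightly rescaling $\delta'$ at the outset).

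The main obstacle will be the bookkeeping around the change of reference measure between $\pi_n$ and $\pi_{n+1}$: the multiplicative overshoot $e^{\sigma_n}$ on the contraction factor must be absorbed into the contraction rate without destroying the clean form of the recursion, and the final $L^2$-to-TV conversion must be done at a point in the argument where $\pi_{min}$ has already been paid for in $\phi_0$. What makes this work is precisely the uniform log-ratio bound of Lemma~\ref{alpha ratio}: it ensures that $\sigma_n$ enters linearly on both the multiplicative and the additive sides of the one-step recursion, so that the hypothesis $\sigma_n/(1-\theta_2(n+1)) \leq \delta'/4$ controls the steady-state potential cleanly, while the Cheeger slack between $(1-\theta_2)$ and $(1-\theta_2)^2$ leaves ample room to state $n^\star$ in the weaker but more convenient form given in~\dref{n*}.
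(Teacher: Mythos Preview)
The paper does not actually prove this proposition; it is quoted from \cite{RSS09} (``Adapted from \cite{RSS09}'') and used as a black box in Step~2. Your proposal reconstructs precisely the argument of that reference: track the $\chi^2$-potential $\phi_n=\|d\nu_n/d\pi_n-1\|_{L^2(\pi_n)}$, use the $L^2(\pi_{n+1})$-contraction of the reversible lazy kernel $P^{Q(n+1)}_\xi$ by its spectral gap, pay a multiplicative and additive cost of order $\sigma_n$ when swapping reference measures via Lemma~\ref{alpha ratio}, iterate, and convert to total variation by Cauchy--Schwarz with the initial bound $\phi_0\le 1/\sqrt{\pi_{\min}(0)}$. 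This is correct and is essentially the \cite{RSS09} proof.

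One small point: your Cheeger-based justification for the square in $\sum_k(1-\theta_2(k))^2$ is unnecessary and a bit circular. Your recursion already yields the decay $\exp\bigl(-\tfrac12\sum_k(1-\theta_2(k))\bigr)$, and in the regime of interest here (where $1-\theta_2(k)$ is exponentially small in $1/\beta$, cf.\ Lemma~\ref{lemma:slem}) this dominates the stated form~\dref{n*}; the square is simply an artifact of how the original reference states the bound (it works with conductance and then applies $1-\theta_2\ge\Psi^2/2$), not something you need to re-derive. You can drop that paragraph and simply note that the $n^\star$ in~\dref{n*} is a conservative sufficient condition implied by your sharper transient bound.
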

In our context, Proposition \ref{drift} states that under
\dref{alphaT} and \dref{n*}, after $n^\star$ steps, the distribution
of the configurations over templates will be close to the desired
steady-state distribution. To get some intuition, $\sigma_n$ has the
interpretation of the rate at which weights change, and
$1/(1-\theta_2(n+1))$ has the interpretation of the time taken for the
system to reach steady-state after the weights change. Thus,
condition~\dref{alphaT} ensures a {\em time-scale decomposition} --
the weights change slowly compared to the time-scale that the system
takes in order to respond and ``settle down'' with these changed weights.

It remains to show that that our system indeed satisfies the conditions of Proposition~\ref{drift} as we do next, for the choice of $\delta^\prime =\frac{\epsilon}{16}$. Suppose $f(x)=\log^{1-b}(x)$, for some $0 <b<1$. Let $y=f(Q_{max}(n+1)+h)$. Obviously $f^\prime(x)\leq 1/x$, so in view of equations \dref{eq: alpha}, \dref{eq:slem}, \dref{alphaT}, it suffices to have
\ben
\frac{2M\alpha}{\beta}\frac{1}{f^{-1}(\frac{\epsilon}{8M} y)-1}\exp\left[\frac{4M}{\beta}(b_{max}+\alpha y)\right] \leq \frac{K_0^2 \epsilon}{128\xi^2}.
\een
Note that $f^{-1}(x)=\exp(x^{\frac{1}{1-b}})$. Suppose $\alpha \leq \beta$. A simple calculation shows that
it suffices to jointly have
\begin{flalign*}
&y \geq \frac{8M}{\epsilon}\log^{1-b} 3,\\
&4M\mu_{max}y-\frac{1}{2}(\frac{\epsilon}{8M} y)^{\frac{1}{1-b}}\leq 0,\\
&\frac{4M}{\beta}b_{max}-\frac{1}{2}(\frac{\epsilon}{8M} y)^{\frac{1}{1-b}}\leq \log\left(\frac{K_0^2 \epsilon}{512\xi^2M}\right).
\end{flalign*}
In summary, the condition \dref{alphaT} holds if
\be
y \geq \Big(\frac{1}{\epsilon}\Big)^{2-b+\frac{1}{b}} \Big(\frac{1}{\beta}\Big)^{1-b} C_0
\ee
or as a sufficient condition, if
\be \label{eq:hcond}
h\geq \exp \Big(C_0 \frac{1}{\beta} (\frac{1}{\epsilon})^{\frac{2-b+\frac{1}{b}}{1-b}} \Big)
\ee
for
$$C_0\geq 8M\Big(8Mb_{\max}+2|\log \frac{512 \xi^2}{K_0^2}|+2+(8M)^{2/b}\mu_{\max}^{(1-b)/b}\Big).$$

Next, we find $n^\star$ that satisfies
\ben
\sum_{k=0}^{n^\star-1}(1-\theta_2(k))^2 \geq -\log(\frac{\epsilon}{16})-\frac{1}{2}\log (\pi_{min}(0)).
\een
From (\ref{pi-min}), and since $\alpha \leq \beta$,
\ben
-\log(\pi_{min}(0)) \leq \log K_1+M \mu_{max}f(Q_{max}(0)+h)+\frac{M}{\beta}b_{max}.
\een
Using Lemma~\ref{lemma:slem}, it can be shown that
\begin{flalign*}
&\sum_{k=0}^{n^\star-1}(1-\theta_2(k))^2   \\
&\geq \frac{2 \xi^2}{K_0^2} e^{-4\frac{Mb_{max}}{\beta}}\sum_{k=0}^{n^\star-1}e^{-4M\mu_{max}f(Q_{max}(k)+h)}\\
& \geq \frac{2 \xi^2}{K_0^2} e^{-4\frac{Mb_{max}}{\beta}}\sum_{k=0}^{n^\star-1}e^{-4M\mu_{max}f(Q_{max}(0)+\bias+n^\star)}\\
& \geq \frac{2 \xi^2}{K_0^2} e^{-4\frac{Mb_{max}}{\beta}}n^\star (Q_{max}(0)+h+n^\star)^{\frac{-4 M\mu_{max}}{\log^b h}}.
\end{flalign*}
For $h \geq \exp((8M\mu_{max})^{1/b})$, it then suffices that
\begin{align*}
&\frac{2 \xi^2}{K_0^2} e^{-4\frac{Mb_{max}}{\beta}}n^\star (Q_{max}(0)+h+n^\star)^{-1/2} \geq\\
&  \log(\frac{16}{\epsilon})+ \frac{1}{2}\log K_1+\frac{M \mu_{max}}{2}f(Q_{max}(0)+h)+\frac{M}{2\beta}b_{max}
\end{align*}
which is clearly satisfied by choosing $n^\star=Q_{max}(0)+h$ for $h$ in \dref{eq:hcond} and $C_0$ a large enough constant.
\end{proof}
\subsubsection*{Step 3: Lyapunov Analysis}
The final step of the proof is based on a Lyapunov optimization method \cite{neely}.
We develop the required Lyapunov arguments for $S(k)=(Q(k),C(k))$, i.e., the jump chain of the uniformized Makov chain.
Consider the following Lyapunov function $$V(k)=\sum_{j\in \calJ} \frac{1}{\mu_j}F(Q^{(j)}(k)+\bias),$$ where
$F(x)= \int_h^{x} f(\tau)\dd \tau$. Recall that $f(x)=\log^{1-b} x$. Therefore $F$ is convex, and following the standard one-step drift analysis
\begin{flalign*}
&V(k+1)-V(k) \leq \\
& \sumj \frac{1}{\mu_j}f(Q^{(j)}(k+1)+\bias)\Big(Q^{(j)}(k+1)-Q^{(j)}(k)\Big)=\\
& \sumj \frac{1}{\mu_j}f(Q^{(j)}(k)+\bias)\Big(Q^{(j)}(k+1)-Q^{(j)}(k)\Big)+  \\
& \sumj \frac{1}{\mu_j}\Big(f(Q^{(j)}(k+1)+\bias)-f(Q^{(j)}(k)+\bias)\Big)\Big(Q^{(j)}(k+1)\\
&-Q^{(j)}(k)\Big).
\end{flalign*}
By the mean value theorem, and using the fact that $f$ is a concave increasing function, it follows that
\begin{align*}
&|f(Q^{(j)}(k+1)+\bias)-f(Q^{(j)}(k)+\bias)| \leq \\
&  f^\prime(\bias)|Q^{(j)}(k+1)-Q^{(j)}(k))|
\end{align*}

Recall that $C(k)=\left(C_a^{(j)}(k),C_v^{(j)}(k)\right)$ where $C_v^{(j)}$ is the set of virtual templates (i.e., the templates that do not contain jobs of type $j$) and $C_a^{(j)}$ is the set of actual templates.

For notational compactness, let $\mathbb{E}_{\ve{S}(k)}[\cdot]=\mathbb{E}[\cdot|\ve{S}(k)]$, where $\ve{S}(k)$ is the state of the system at each index $k$. Then
\begin{align*}
&\expectsk{V(k+1)-V(k)} \leq f^\prime(h) \sum_j \frac{1}{\mu_j} \Big(\frac{\lambda_j}{\xi}+\frac{|C^{(j)}(k)\mu_j}{\xi}\Big) +\\
&\sum_j  \frac{1}{\mu_j} {f(Q^{(j)}(k)+\bias)\Big[\frac{\lambda_j}{\xi}-\left(|C^{(j)}(k)|-|C_v^{(j)}(k)|\right)\frac{\mu_j}{\xi} \Big]},
\end{align*}
where we have used the fact that at most one arrival or departure can happen at every jump index, i.e., $|Q^{(j)}(k+1)-Q^{(j)}(k)|\in \{0,1\}$.

Note that clearly the maximum number of templates of any type of jobs that can fit in a configuration is less than $M$ (recall that $M=\sum_{\ell \in \calL} |m_\ell|$). Moreover, none of the templates of type $j$ will be virtual if more than $M$ jobs of type $j$ are available in the system, hence, 
\begin{align*}
&|C_v^{(j)}(k)| f(\bias+Q^{(j)}(k))\leq |C_v^{(j)}(k)| f(\bias+M).
\end{align*}
and therefore,
\begin{flalign*}
& \expectsk{V(k+1)-V(k)}  \leq K_2+K_3 \\
&+ \frac{1}{\xi} \sum_j  {f(Q^{(j)}(k)+\bias)\Big({\rho_j}-|C^{(j)}(k)|\Big)}
\end{flalign*}
where $K_2\leq f^\prime(\bias)(M+\sum_j \rho_j)/\xi$, and $K_3 \leq Mf(\bias+M)/{\xi}$. Therefore, it follows that
\begin{align*}
&\alpha \expectsk{V(k+1)-V(k)}+\frac{1}{\xi}\expectsk{\sumj\sumA \bjA} \leq \\
& \alpha (K_2+ K_3) + \frac{\alpha}{\xi} \sumj  \rho_j f\Big(Q^{(j)}(k)+\bias\Big)\\
&-\frac{1}{\xi}\sumj\sumA \Big[\alpha f\Big(Q^{(j)}(k)+\bias\Big)-\bjA \Big]
\end{align*}
Taking the expectation of both sides with respect to $\nu_n$ (distribution of configurations given the queues at $n>n^*$), we get
\begin{align}
&\alpha \expectqk{V(k+1)-V(k)} +\frac{1}{\xi}\expectqk{\sumj \sum_{A\in \calA^{(j)}} x_A(k) \bjA}  \nonumber \\
& \leq \alpha(K_2+ K_3)+ \frac{\alpha}{\xi} \sumj  \rho_j f(Q^{(j)}(k)+\bias)\nonumber\\
&-\frac{1}{\xi}\expectqk{\sumj\sumA (\alpha f(Q^{(j)}(k)+\bias)-\bjA )}\nonumber \\
&  \leq \alpha( K_2+K_3) -\frac{\beta}{\xi}\log \gamma_{min}+\frac{\alpha}{\xi}\sumj  \rho_j f(Q^{(j)}(k)+\bias)\nonumber \\
&-\frac{1}{\xi}(1-\frac{\epsilon}{4})W^\star(k)+\frac{\epsilon}{\xi} b_{\max},\label{eq:deltaplus}
\end{align}
where the last inequality is based on Corollary \ref{cor3}, where
\ben
W^\star(k)=\max_{C \in \calC} \sumj\sum_{A\in C^{(j)}} \Big(\alpha f(Q^{(j)}(k)+\bias)-\bjA \Big)
\een
Notice that equivalently
\ben
W^\star(k)&=\displaystyle{\max_{\{x_A\}}} &\sumj\sum_{A\in \calA^{(j)}} x_A\Big(\alpha f(Q^{(j)}(k)+\bias)-\bjA \Big)\\
 &\mbox{subject to} &(\sum_{A \in \calA^{(j)}}x_A; j \in \calJ) \in \Lambda\\
 &&x_A \geq 0; \forall A \in \cup_j \calA^{(j)}
\een
Let $x^\star$ be the optimal solution to the static partitioning problem. By the feasibility of $x^\star$, $\rho_j \leq \sum_{A\in \calA^{(j)}}x^\star_A$, for all $j \in \calJ$, hence
\be \label{eq:term1}
\sumj  \rho_j f(Q^{(j)}(k)+\bias) \leq \sumj \sum_{A \in A^{(j)}} {{x_A^\star}} f(Q^{(j)}(k)+\bias).
\ee
Further, by assumption, $\rho$ is strictly inside $\Lambda$, thus there exists a $\delta^\star$ such that $\rho(1+\delta^\star) \in \Lambda$. It is easy to show by the monotonicity of $\calC$ (i.e., if $C\in \calC$, $C \backslash A \in \calC$, for all $A \in C^{(j)}$, $j \in \calJ$) that, at the optimal solution, the constraint \dref{cons1} should in fact hold with equality. Hence 
$$\left(\sum_{A \in \calA^{(j)}}{x^\star}_{A}(1+\delta^\star ): j \in \calJ \right) \in \Lambda.$$
Therefore,
\be \label{eq:term2}
 W^\star(k)\geq \sumj \sum_{A \in \calA^{(j)}} \left((1+\delta){x^\star}_{A}\right) \Big(\alpha f(Q^{(j)}(k)+\bias)-b^{(j)}_{A}  \Big)
\ee
For $\epsilon \leq \delta^\star$, $(1-\frac{\epsilon}{4})(1+\delta^\star) \geq 1+\frac{\delta}{2}$, for any $\delta \in [0,\delta^\star].$ Then using \dref{eq:term1} and \dref{eq:term2} in \dref{eq:deltaplus},
\begin{align}
&\alpha \expectqk{V(k+1)-V(k)} +\frac{1}{\xi}\expectqk{\sumj \sum_{A\in \calA^{(j)}} x_A(k) \bjA}  \nonumber \\
& \leq \alpha (K_2+ K_3)+\frac{\alpha}{\xi} \sumj \sum_{A \in A^{(j)}} {{x_A^\star}} f(Q^{(j)}(k)+\bias) \nonumber\\
& - \frac{1}{\xi}(1+\frac{\delta}{2})\sumj \sum_{A \in \calA^{(j)}} {x^\star}_{A} \Big(-b^{(j)}_{A} +\alpha f(Q^{(j)}(k)+\bias)  \Big) \nonumber\\
&-\frac{\beta}{\xi} \log \gamma_{min} +\frac{\epsilon}{\xi} b_{\max}\nonumber\\
&=\frac{1}{\xi}(1+\frac{\delta}{2}) G(x^\star) -\frac{\alpha}{\xi}\frac{\delta}{2}\sumj \sumA x_A^\star f(Q^{(j)}(k)+\bias)\nonumber\\
&-\frac{\beta}{\xi} \log \gamma_{min}+\frac{\epsilon}{\xi} b_{\max}+\alpha (K_2+ K_3) \label{eq:deltaplus2}.
\end{align}
It follows from this that the Markov chain $(\ve{Q}(k),\ve{C}(k))$ is positive recurrent as a consequence of the Foster-Lyapunov theorem, with Lyapunov function $V(\cdot)$. Taking the expectation of both sides of \dref{eq:deltaplus2} with respect to $\ve{Q}(k)$, and then taking summation over $k=0,..., N-1$, and dividing by $N$, and letting $N\to \infty$ yields
\begin{flalign*}
&\limsup_N \frac{1}{N}\sum_{k=0}^{N-1}\sumj\expect{f(Q^{(j)}(k)+\bias)} \\
&\leq \frac{\alpha \xi (K_2+K_3)-\beta \log\gamma_{min}+\epsilon b_{\max}+(1+\delta/2)G(x^\star)}{\alpha \rho_{min}{\delta/2}}\\
&\limsup_N \frac{1}{N}\sum_{k=0}^{N-1}\expect{G(x(k))}\\
&\leq (1+\delta/2)G(x^\star)+\alpha \xi (K_2 + K_3)-\beta \log\gamma_{min}+\epsilon b_{\max}
\end{flalign*}
where we have used the fact that contribution of queue sizes and costs in $(0,n^\star]$ to the average quantities vanishes to zero as $N\to \infty$.
The above inequalities can be independently optimized over $\delta \in [0, \delta^\star]$ (the performance of the algorithm is independent of $\delta$). Here we choose $\delta=\delta^*$ in the queue inequality and $\delta=0$ in the cost inequality.  The statement of Theorem then follows using the Ergodic theorem and the fact that the jump chain and the original chain have the same steady-state average behaviour.

\section{Conclusions}\label{sec:con}

Motivated by modern stream data processing applications, we have
investigated the problem of dynamically scheduling graphs in cloud
clusters, where a graph represents a specific computing job. These
graphs arrive and depart over time. Upon arrival, each graph can
either be queued or served immediately. The objective is to develop
algorithms that assign nodes of these graphs to free (computing)
slots in the machines of the cloud cluster. The performance metric for
the scheduler (partition graphs and map it to slots) is to minimize
the average graph partitioning cost, while keeping the system stable.

We have proposed a novel class of low complexity algorithms which can
approach the optimal solution by exploiting the trade-off between
delay and partitioning cost, without causing service interruptions.
The key ingredient of the algorithms is the generation/removal of
random templates from the cluster at appropriate instances of time,
where each template is a unique way of partitioning a graph.

\bibliographystyle{abbrv}
\bibliography{storm.bbl}

\appendix
\section{Proof of Theorem 1}
The proof is standard and based on a Lyapunov optimization method \cite{neely}.
Consider a Lyapunov function $V(t)=\sum_{j\in \calJ} \frac{1}{\mu_j}F(Q^{(j)}(t))$, where $F(x)= \int_0^ {x} f(\tau)\dd \tau$. Recall that $f: \mathds{R}_+ \to \mathds{R}_+$ is a concave increasing function; thus $F$ is convex. Choose an arbitrarily small $u>0$. It follows from convexity of $F$ that for any $t \geq 0$,
\begin{align*}
&V(t+u)-V(t) \leq \sumj \frac{1}{\mu_j}f(Q^{(j)}(t+u))(Q^{(j)}(t+u)-Q^{(j)}(t))\\
& = \sumj \frac{1}{\mu_j}f(Q^{(j)}(t))(Q^{(j)}(t+u)-Q^{(j)}(t))\\
& + \sumj \frac{1}{\mu_j}(f(Q^{(j)}(t+u))-f(Q^{(j)}(t)))(Q^{(j)}(t+u)-Q^{(j)}(t)).
\end{align*}
By definition $$ Q^{(j)}(t+u)-Q^{(j)}(t)= H^{(j)}(t,t+u)-D^{(j)}(t,t+u),$$ where for any $0 \leq t_1 \leq t_2$,
\ben
H^{(j)}(t_1,t_2)=\calN_j^{H}(\lambda_j(t_2-t_1))\\
D^{(j)}(t_1,t_2)=\calN_j^D(\int_{t_1}^{t_2}C_a(\tau)\mu_j\dd \tau),
\een
where $\calN_j^A(z)$ and $\calN_j^D(z)$ denote independent Poisson random variables with rate $z$, for all $j \in \calJ$. Recall that $C(t)=\left(C_a^{(j)}(t),C_v^{(j)}(t)\right)$ where $C_v^{(j)}$ is the set of virtual templates (i.e., the templates that do not contain jobs of type $j$) and $C_a^{(j)}$ is the set of actual templates.
It is easy to see that
\ben
|f(Q^{(j)}(t+u))-f(Q^{(j)}(t))|\leq f^\prime(0)|Q^{(j)}(t+u)-Q^{(j)}(t)|,
\een
by the mean value theorem, and the fact that $f$ is a concave increasing function.
For notational compactness, let $ \mathbb{E}_{\ve{S}(t)}[\cdot]= \mathbb{E}[\cdot|\ve{S}(t)]$, where $\ve{S}(t)$ is the state of the system at each time $t$. Clearly the the maximum number of templates that can fit in a configuration is less than $M$. It is easy to see that
\begin{align*}
&\expectsT{V(t+u)-V(t)}  \leq \\
&\sum_j \frac{1}{\mu_j} \expectsT{f(Q^{(j)}(t))(A^{(j)}(t,t+u)-D^{(j)}(t,t+u) }\\
&+K_2u+\oo(u),
\end{align*}
where $ K_2=f^\prime(0)\sumj (\rho_j+M)$. Clearly virtual templates do not exist in the configuration if there are more than $M$ jobs in the system. Hence it follows that
\begin{align*}
&\sumj \frac{1}{\mu_j}\expectsT{D^{(j)}(t,t+u)f(Q^{(j)}(t))}\geq \\
&  {\sumj  \expectsT{|C^{(j)}(t)|f(Q^{(j)}(t))}u}-K_3u-\oo(u)
\end{align*}
for $K_3=Mf(M)$. Note that the algorithm keeps the configuration fixed over intervals $[kT,(k+1)T)$, i.e., $C(t)=C(kT)$ for $t \in[kT,(k+1)T)$. Let $\Delta_{t,u}:=\frac{1}{u}\expectsT{V(t+u)-V(t)}$, then,
\ben
{\Delta_{t,u}} &\leq & \expectsT{\sumj f(Q^{(j)}(t)) \left(\rho_j  - |C^{(j)}(kT)|\right)} \\
&&+ K_2+K_3+\oo(1)\\
& \leq& \sumj f(Q^{(j)}(kT)\Big(\rho_j  - |C^{(j)}(kT)|\Big)\\
&&+K_2^2T+K_2+K_3+\oo(1)
\een
Taking the limit $u \to 0$,
\ben
\frac{\dd \expectsT{V(t)}}{\dd t} \leq \sumj f(Q^{(j)}(kT)\Big(\rho_j  - |C^{(j)}(kT)|\Big)+K_4
\een
where $K_4=K_2^2T+K_2+K_3$.
Let $\Delta_k = \expectsT{V((k+1)T)-V(kT)}$, then over the $k$-th cycle
\ben
\frac{\alpha}{T} \Delta_{k} +\expectsT{\sumj \sum_{ A \in C^{(j)}(kT)}  b^{(j)}_{A}} \leq \mbox{Term}_1 - \mbox{Term}_2 +\alpha K_4, \label{eq:terms}
\een
where
\ben
\mbox{Term}_1&=&\alpha \sumj \rho_j f(Q^{(j)}(kT)),\\
\mbox{Term}_2&=&\sumj \sum_{ A \in C^{(j)}(kT)} \left(\alpha f(Q^{(j)}(kT)) - b^{(j)}_{A}\right).
\een
Let $x^\star$ be the optimal solution to the static partitioning problem. The rest of the proof is similar to the Lyapunov analysis in the proof of Theorem~\ref{th3} (step 3), i.e., for any $0 \leq \delta \leq \delta^\star$,
\begin{align*}
&\mbox{Term}_1 \leq \alpha \sumj \sum_{A \in A^{(j)}}{{x_A^\star}} f(Q^{(j)}(kT))\\
&\mbox{Term}_2 \leq \sumj \sum_{A \in \calA^{(j)}} \Big((1+\delta){x^\star}_{A}\Big) \Big(\alpha f(Q^{(j)}(kT)-b^{(j)}_{A}) \Big)
\end{align*}
Putting everything together,
\ben
\frac{\alpha}{T} \Delta_{k} +\expectsT{\sumj \sum_{A \in C^{(j)}}  b^{(j)}_{A}} \leq (1+{\delta}) G(x^\star) \\
-\alpha\delta\rho_{min}\sumj f(Q^{(j)}(kT))+\alpha K_4.
\een
Then it follows from the Foster-Lyapunov theorem that the Markov chain $ve{S}(kT)$, $k=0,1,2,\cdots$ (and therefore Markov chain $\ve{S}(t), t \geq 0$) is positive recurrent. As in the step 3 in the proof of of Theorem~\ref{th3}, we take the expectation from both sides of the above equality with respect to $\ve{S}(kT)$, and then sum over $k=0,..,N-1$, divide by $N$, and let $N\to \infty$.
Then the statement of the theorem follows by choosing $B_1=K_2+K_3$ and $B_2=K_2^2$.

\section{An Alternative Description of Dynamic Graph Partitioning Algorithm}
The Dynamic Graph Partitioning (DGP) algorithm, as described in Section~\ref{sec:low}, does not require any dedicated clock as the decisions are made at the instances of job arrival and departure. In this section, we present an alternative description of the algorithm by using dedicated clocks. Each queue $Q^{(j)}$ is assigned an independent Poisson clock of rate $\hat{\lambda}e^{\alpha f^{(j)}( \ve{\bias}+\ve{Q}(t))/\beta}$, where $\hat{\lambda}$ is a fixed constant depending on how fast the iterations in the algorithm can be performed. Equivalently, at each time $t$, the time duration until the tick of the next clock is an exponential random variable with parameter $\hat{\lambda}e^{\alpha f^{(j)}( \ve{\bias}+\ve{Q}(t))/\beta}$. This means if $Q^{(j)}$ changes at time $t^\prime >t$ before the clock makes a tick, the time duration until the next tick is reset to an independent exponential random variable with parameter $\hat{\lambda}e^{\alpha f^{(j)}( \ve{\bias}+\ve{Q}(t^\prime))/\beta}$. The description of the algorithm is given below.

\begin{algorithm}
\caption{Alternative Dynamic Graph Partitioning $\mathrm{(ADGP)}$ Algorithm}
\noindent \textbf{At the instances of dedicated clocks.\\}
Suppose the dedicated clock of queue $Q^{(j)}$ makes a tick, then:
\begin{algorithmic}[1]
\STATE A virtual template $A^{(j)}$ is chosen randomly from currently feasible templates for graph $\calG_j$, given the current configuration, using \textit{Random Partition Procedure}, if possible. Then this template is added to the configuration with probability $e^{-\frac{1}{\beta}b^{(j)}_A}$ and discarded otherwise. The virtual template leaves the system after an exponentially distributed time duration with mean $1/\mu_j$.
\STATE If there is a job of type $j$ in $Q^{(j)}$ waiting to get service, and a virtual template of type $j$ is created in step 1, this virtual template is filled by a job from $Q^{(j)}$ which converts the virtual template to an actual template.
\end{algorithmic}

\noindent \textbf{At arrival instances.}
\begin{algorithmic}[1]
\STATE Suppose a graph (job) of type $\calG_j$ arrives. The job is added to queue $Q^{(j)}$.
\end{algorithmic}

\noindent \textbf{At departure instances.}

\begin{algorithmic}[1]
\STATE At the departure instances of actual/vitual templates, the algorithm removes the corresponding template from the configuration.
\STATE If this is a departure of an actual template, the job is departed and the corresponding queue is updated.
\end{algorithmic}
\end{algorithm}


The algorithm will yield average queue size and partitioning cost performance similar to those in Theorem \ref{th3}. The proof essentially follows the three steps of the proof of Theorem~\ref{th3}. Here, we only describe the main property of the Alternative Dynamic Graph Partitioning algorithm with fixed weights, which we refer to as $\overline{\mathrm{ADGP}}(\tilde{W})$ (the counterpart $\overline{\mathrm{DGP}}(\tilde{W})$ in Section \ref{sec:proofs}).

\begin{proposition} \label{th4}
Under $\overline{\mathrm{ADGP}}(\tilde{W})$, the steady state distribution of configurations solves
\ben
\max_{\pi}\expectp{\sum_{j\in \calJ} \sum_{A \in C^{(j)}} \tw}-\beta D_{KL}(\pi \parallel \hat{\gamma})
\een
for the following distribution $\hat{\gamma}$
\be \label{gammaform2}
\hat{\gamma}_C=\frac{1}{Z_{\hat{\gamma}}}\left(\sum_\ell |m_l|-\sum_j|C^{(j)}||V_j|\right) !\prod_{j}{\left(\frac{\hat{\lambda}}{\mu_j}\right)}^{|C^{(j)|}}
\ee
where $\hat{Z}_{\hat{\gamma}}$ is the normalizing constant.
\end{proposition}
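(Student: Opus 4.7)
The plan is to follow the same two-pronged argument used for Proposition~\ref{th2}: (i) identify the maximizer of the strictly concave objective in closed form by a Lagrangian / KKT calculation, and (ii) verify via detailed balance that this closed-form distribution is indeed the stationary distribution of the Markov chain induced by $\overline{\mathrm{ADGP}}(\tilde{W})$.

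For step (i), I would repeat the argument leading to \dref{expform} verbatim, since the only difference is that $\gamma$ is replaced by $\hat{\gamma}$. Writing $F^{(\beta)}(\pi) = \expectp{\sum_{j} \sum_{A\in C^{(j)}} \tw} - \beta D_{KL}(\pi \parallel \hat{\gamma})$, the function is strictly concave in $\pi$, so the unique optimum satisfies the KKT conditions with a single normalization multiplier, yielding
\begin{equation*}
\pi^\star(C) \;=\; \frac{1}{\hat{Z}_\beta}\,\hat{\gamma}(C)\,\exp\!\Bigl(\tfrac{1}{\beta}\sum_{j\in\calJ}\sum_{A\in C^{(j)}} \tw\Bigr).
\end{equation*}

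For step (ii), I would write down the transition rates of the continuous-time chain $(C(t))$ under $\overline{\mathrm{ADGP}}(\tilde{W})$. Under fixed weights $\tilde{W}$, when the chain is in configuration $C$ and $C\oplus A^{(j)}\in\calC$, the rate of transition from $C$ to $C\oplus A^{(j)}$ equals (clock rate of queue $j$) $\times$ (probability of sampling template $A$ from $\calA^{(j)}(C)$) $\times$ (probability of acceptance), which is $\hat{\lambda}\,e^{\alpha f^{(j)}/\beta}\cdot \tfrac{1}{|\calA^{(j)}(C)|}\cdot e^{-b^{(j)}_A/\beta} = \tfrac{\hat{\lambda}}{|\calA^{(j)}(C)|}\,e^{\tw/\beta}$, while the reverse rate (template departure) is $\mu_j$. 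Hence the detailed balance equation to check is
\begin{equation*}
\pi^\star(C)\cdot \frac{\hat{\lambda}}{|\calA^{(j)}(C)|}\,e^{\tw/\beta} \;=\; \pi^\star(C\oplus A^{(j)})\cdot \mu_j.
\end{equation*}
Plugging in the candidate $\pi^\star$, this reduces to the identity $\hat{\gamma}(C\oplus A^{(j)})/\hat{\gamma}(C) = (\hat{\lambda}/\mu_j)/|\calA^{(j)}(C)|$.

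The main (but routine) obstacle is just verifying this ratio identity using the combinatorial form \dref{gammaform2}. Letting $N(C)=\sum_\ell |m_\ell|-\sum_j |C^{(j)}||V_j|$ denote the number of free slots, the factorial prefactor contributes $(N(C)-|V_j|)!/N(C)!$ and the product contributes an extra factor of $\hat{\lambda}/\mu_j$. Since $|\calA^{(j)}(C)| = \binom{N(C)}{|V_j|}|V_j|! = N(C)!/(N(C)-|V_j|)!$, the identity holds, so $\pi^\star$ is reversible and hence the unique stationary distribution. Combining with step (i) gives the proposition. I would close by noting that the interpretation of $\hat{\gamma}$ as the stationary distribution of the associated ``loss system'' carries over with $\rho_j$ replaced by $\hat{\lambda}/\mu_j$, reflecting the fact that candidate templates are now generated by the dedicated clocks rather than by the arrival process.
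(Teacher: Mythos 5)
Your proof is correct and takes essentially the same approach as the paper: a Lagrangian/KKT calculation to identify the optimizer of the strictly concave objective, followed by a detailed-balance check using the transition rates of $\overline{\mathrm{ADGP}}(\tilde{W})$. The only difference is cosmetic — you spell out the ratio identity $\hat{\gamma}(C\oplus A^{(j)})/\hat{\gamma}(C)=(\hat{\lambda}/\mu_j)/|\calA^{(j)}(C)|$ that the paper simply states is "easy to see."
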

Similarly to the $\mathrm{DGP}(\tilde{W})$ algorithm, as $\beta \to \infty$, the optimizing $\pi$ converges to $\hat{\gamma}$. The distribution $\hat{\gamma}$ has the interpretation of the steady state distribution of configurations in a loss system with arrival rates $\hat{\lambda_j}=\hat{\lambda}$, $j \in \calJ$, and service rates $\hat{\mu_j}=\mu_j$, $j \in \calJ$. In the loss system, when a graph arrives, it is randomly distributed over the machines if possible; otherwise it is dropped. At the departure instances, the job and hence its template leave the system.
\begin{proof}[of Proposition \ref{th4}]
The proof is basically identical to the proof of Proposition~\ref{th2}. The only difference is that the detailed balance equations are given by
\ben
\pi(C \oplus A^{(j)})\mu_j =\pi(C) \frac{ \hat{\lambda}e^{\alpha f(Q^{(j)})/\beta}}{|\calA^{(j)}(C)|}e^{-\frac{1}{\beta}b^{(j)}_A}
\een
for any configuration $C$ and $C \oplus A^{(j)} \in \calC$; $A^{(j)}\in \calA^{(j)}$,
$j \in \calJ$. Here the LHS is the departure rate of (virtual or actual) template $A^{(j)}$ from the configuration $C \oplus A^{(j)}$. The RHS is the rate at which the (actual or virtual) template $A^{(j)}$ for graphs $\calG_j$ is added to configuration $C$ (the \textit{Random partition Procedure} selects a template $A \in \calA^{(j)}(C)$ uniformly at random).
Thus the detailed balanced equations are given by
\be
\pi(C \oplus A^{(j)})&=&\pi(C)\frac{\hat{\lambda}/\mu_j}{{|\calA^{(j)}(C)|}}e^{\frac{1}{\beta}w^{(j)}_A}.
\ee
 and it is easy to see that $\dref{expform}$ with $\gamma$ replaced with $\hat{\gamma}$ in (\ref{gammaform2}), indeed satisfies the detailed balance equations, with the normalizing condition that $\sum_C \pi(C)=1$.
The fact that that this distribution maximizes the stated objective function follows in parallel with the arguments in the proof of Proposition \ref{th2}.
\end{proof}

\end{document}